\DeclareMathOperator*{\argmin}{arg\,min}
\newcommand{\var}{\mathrm{var}}
\newcommand{\logit}{\mathrm{logit}}
\newcommand{\E}{\mathbb{E}}
\newcommand{\p}{P}
\newcommand{\pn}{P_n}
\newcommand{\qn}{Q_n}
\newcommand{\pih}{\hat \pi}
\newcommand{\ind}{\mathbb{I}}
\newcommand{\Ind}[1]{\mathbb{I}\{#1\}}
\newcommand{\covar}{\mathrm{cov}}
\newcommand{\IF}{\mathrm{IF}}
\newcommand\norm[1]{\left\lVert#1\right\rVert}
\newcommand{\pmin}{\underline{\rho}}
\newcommand{\pmax}{\overline{\rho}}
\newcommand{\phih}{\hat{\phi}}
\newcommand{\pibar}{\bar{\pi}}
\newcommand{\muone}{\mu_1}
\newcommand{\muz}{\mu_0}
\theoremstyle{definition}
\newtheorem{theorem}{Theorem}[section]
\newtheorem{proposition}{Proposition}[section]
\newtheorem{corollary}{Corollary}[section]
\newtheorem{remark}{Remark}[section]
\newtheorem{assumption}{Assumption}[section]
\newtheorem{definition}{Definition}[section]
\newtheorem{lemma}{Lemma}[section]
\title{The role of the geometric mean in case-control studies
}
\author{
 Amanda Coston \\
  Heinz College \& Machine Learning Dept.\\
  Carnegie Mellon University\\
  \texttt{acoston@cs.cmu.edu}
  \and
  Edward H. Kennedy \\
  Dept. of Statistics  \& Data Science \\
  Carnegie Mellon University\\
 \texttt{edward@stat.cmu.edu} 
}
\begin{document}
\date{}
\maketitle

\begin{abstract}
Historically used in settings where the outcome is rare or data collection is expensive, outcome-dependent sampling is relevant to many modern settings where data is readily available for a biased sample of the target population, such as public administrative data. 
Under outcome-dependent sampling, common effect measures such as the average risk difference and the average risk ratio are not identified, but the conditional odds ratio is. Aggregation of the conditional odds ratio is challenging since summary measures are generally not identified. Furthermore, the marginal odds ratio can be larger (or smaller) than all conditional odds ratios. This so-called non-collapsibility of the odds ratio is avoidable if we use an alternative aggregation to the standard arithmetic mean. We provide a new definition of collapsibility that makes this choice of aggregation method explicit, and we demonstrate that the odds ratio is collapsible under geometric aggregation. We describe how to partially identify, estimate, and do inference on the geometric odds ratio under outcome-dependent sampling. Our proposed estimator is based on the efficient influence function and therefore has doubly robust-style properties. 
\\ \\
\textbf{Keywords} --- collapsibility, odds ratio, outcome-dependent sampling, partial identification, efficient influence function, geometric aggregation
\end{abstract}

\section{Introduction}

Outcome-dependent sampling schemes like case-control sampling\footnote{In case-control sampling, `cases' are observations where the outcome occurred and `controls' are observations where the outcome did not occur. This is also known as case-referent sampling, case-noncase sampling, or choice-based sampling.} have long been used in medical, epidemiological, and survey research when the outcome is rare or the data are expensive to collect.  For example, pancreatic cancer is a particularly deadly but rare cancer whose risk factors have been analyzed under case-control designs \citep{hassan2007risk, lucenteforte2012alcohol}. Case-control methods are also often used in criminal justice settings to understand risk factors for crime and to audit for racial bias in police brutality \citep{loftin1988analysis, campbell2003risk, bogstrand2014drugs, wheeler2017factors, ridgeway2020role}.

A case-control design samples from two distributions: the population where the outcome occurred (‘cases’) and the population where the outcome did not occur (‘controls’) \citep{cornfield1951method,  miettinen1976estimability, breslow1996statistics}. An alternative outcome-dependent sampling scheme, known as case-population sampling, samples from the general population instead of the control distribution \citep{lancaster1996case, jun2020causal}. 
The broader class of biased sampling designs has a vast literature including choice-based sampling \citep{manski1977estimation, heckman2009note}, stratified sampling \citep{imbens1996efficient}, and exposure-based sampling such as the matched cohort design \citep{kennedy2015semiparametric}.

Under outcome-dependent sampling, common effect measures such as the average risk difference and the average risk ratio are not identified, but the conditional odds ratio is identified. 
The conditional odds ratio is often estimated under constant effect and/or parametric assumptions. Common in practice, logistic regression estimates the conditional odds ratio under the strong assumption that the parametric form is correctly specified \citep{prentice1979logistic}. 
A semiparametric approach relaxes these assumptions by placing restrictions directly on the odds ratio function, leaving other parts of the data-generating process unspecified \citep{chen2007semiparametric, tchetgen2010doubly}.
To avoid parametric assumptions altogether, a targeted maximum likelihood estimator (TMLE) with double robustness properties has been proposed for the population odds ratio, though only when the outcome rate is known \citep{van2008estimation}.

The odds ratio can approximate or bound other For example, the odds ratio approximates the risk ratio under a rare disease assumption  \citep{cornfield1951method, greenland1982need}. A recent work uses the odds ratio to upper bound the risk ratio under monotonicity assumptions on the treatment response and treatment selection \citep{jun2020causal}. To account for biased sampling, they target estimation of the outcome-conditional log odds ratio, and they propose a retrospective sieve logistic estimator to do so. 
The odds ratio can also be used to define a measure of interaction effect when there is more than one exposure of interest \citep{vanderweele2011weighting}.

Our work considers odds ratios as the effect of interest, in a setting where the outcome rate is unknown. We compare several odds ratio-based measures of effect, discussing their usefulness (Section~\ref{section: problem setting}), identifiability (Section~\ref{section:identification}), and collapsibility (Section~\ref{section: collapsibility}). Of note is our novel observation that the geometric mean odds ratio is collapsible, unlike the standard (arithmetic aggregated) odds ratio. We offer a new definition of collapsibility that makes explicit the choice of aggregation. We provide the partial identification of the geometric mean odds ratio under outcome-dependent sampling. 
We propose a doubly robust-style estimation approach for the partially identified geometric odds ratio and describe conditions under which the estimator is $\sqrt{n}$-consistent and asymptotically normal in Section~\ref{section: estimation}. Lastly we describe how to do inference for the geometric odds ratio over a range of possible values for the unknown outcome rate in Section~\ref{section:inference}.

\section{Problem setting} \label{section: problem setting}

Define $Z = (X, A, Y)$ where $X \in \mathbb{R}^d$ are covariates, $A \in \{0,1\}$ is a binary treatment or exposure, and $Y\in \{0,1\}$ a binary outcome. Let $\mathcal{P}$ denote the space of probability distributions. 

Our population of interest has distribution $P$. We observe data $(Z_1, Z_2,... Z_n)$ from a possibly biased distribution $Q$, where the bias results from outcome-dependent sampling. Let $\rho=P(Y=1)$ and $\omega=Q(Y=1)$ denote the probability that $Y=1$ under $P$ and $Q$ respectively.  
The quantity 
$\omega$ is fixed and known--or if not known, can be easily estimated from the data. The quantity $\rho$ is fixed but generally unknown. We make virtually no assumptions about knowledge of $\rho$, requiring only that it lie in a user-specified range $[\pmin,~\pmax]$. We anticipate in many settings the user to generally have some knowledge about $\rho$ to inform the range, but if not, the user can specify $\pmin = \epsilon ,\pmax =1-\epsilon$.

While case-control data consists of samples from two different distributions, it can equivalently be viewed as independent and identically distributed draws from a modified distribution, known as Bernoulli sampling \citep{breslow2000semi}.
Under Bernoulli sampling, we first sample $Y \sim \mathrm{Bernoulli}(\omega)$, and then we sample $(X,A) \sim P(X, A \mid Y)$. We can relate $P$ and $Q$ as
\begin{align} \label{equation: bernoulli sampling}
    Q(z) = P(x,a \mid y)\omega = P(z)\frac{\omega}{\rho} 
\end{align}


We use potential (counterfactual) outcomes to define several measures of effect.
Let $Y^1$ denote the potential outcome that would have been observed under treatment and $Y^0$ the potential outcome that would have been observed under no treatment. 
Expectations are taken with respect to the target distribution $P$ unless otherwise noted via subscript.  



\begin{definition}[Population odds ratio]
The population odds ratio describes the population odds of the outcome occurring under treatment relative to the odds under no treatment.
$$ OR := \frac{P(Y^1 =1)/P(Y^1=0)}{P(Y^0 =1)/P(Y^0=0)} = \frac{\mathrm{odds}(Y^1 =1)}{\mathrm{odds}(Y^0 =1)}$$
\end{definition}

The population odds ratio describes the ratio of the odds of the outcome in a world where everyone is treated to the odds of the outcome in a world where no one is treated. In some settings, such as when the outcome is very rare, we may be interested in interventions targeting certain strata of the population. Then our target of interest is the conditional odds ratio, as given in the following definition.

\begin{definition} [Conditional odds ratio]
The conditional odds ratio, also known as the adjusted odds ratio, describes the odds of the outcome occurring under treatment relative to the odds under no treatment for a particular stratum of the population with measured covariates $x$.
$$ OR(x) := \frac{P(Y^1 =1 \mid X = x)/P(Y^1=0\mid X = x)}{P(Y^0 =1 \mid X = x)/P(Y^0=0 \mid X = x)} = \frac{\mathrm{odds}(Y^1 =1 \mid X = x)}{\mathrm{odds}(Y^0 =1 \mid X = x)}$$
\end{definition}

Like many nonparametric function estimation problems, the conditional odds ratio is generally hard to estimate flexibly at fast rates. Logistic regression (without interactions) is probably the most commonly used method to estimate the conditional odds ratio in practice, but this approach will be biased when the conditional odds ratio varies across strata or when the linearity or logistic link function assumptions do not hold. On the other hand, nonparametric plugin methods suffer the curse of dimensionality. We may hope to avoid the curse of dimensionality by targeting the marginal effect since aggregated quantities can typically be estimated at faster rates even in nonparametric models. When we have randomly sampled iid observations, we can compare the importance of characterizing conditional effects against the reasonableness of the assumptions required, and we can then choose to target the marginal or conditional effect accordingly. When our data is from an outcome-dependent sample, however, this usual trade-off no longer applies because the marginal effect is not identified, as we show in the next section.

When the covariates are continuous or high-dimensional, we may desire an aggregation of the conditional odds ratio to describe a summary measure. The most common aggregation uses the arithmetic mean.

\begin{definition} [Arithmetic odds ratio]
The arithmetic aggregation of the conditional odds ratio is 
\begin{align}
    \alpha := \E[\mathrm{OR}(X)]
\end{align}

\end{definition}

A complicating issue is the population odds ratio could be smaller (or larger) than all conditional odds ratios. In other words, the odds ratio is not collapsible \citep{greenland1986identifiability, miettinen1981confounding, robinson1991some}. 

\begin{definition}[Geometric odds ratio] \label{definition: geometric odds ratio}
The geometric odds ratio describes the geometric mean of the conditional odds ratios,
\begin{align} \label{equation: geometric odds ratio def1}
    \gamma := \prod \Bigg\{ OR(x)\Bigg\}^{dP(x)}
\end{align}
where $\prod$ denotes the product integral if $X$ is continuous or the product operator if $X$ is discrete. We can equivalently write the geometric odds ratio as 
\begin{align} 
    \label{equation: geometric odds ratio def3}
     \gamma  &= \exp{ \Big( \E\big[ \log ( OR(X) ) \big]} \Big).
\end{align}
\end{definition}

\paragraph{Miscellaneous notation}
 $L \lesssim R$ indicates that $L \leq C \cdot R$ for some universal constant $C$. 
Define the squared $L_2(\p)$ norm of a function $f$ as ${\norm{f}^2 := \int (f(x))^2 dP(x)}$. $\Ind{}$ denotes the indicator function.
For samples $Z_1, Z_2,.... Z_n \sim Q$, we denote sample averages $\frac{1}{n}\sum_{i=1}^n f(Z_i)$ as $Q_n(f(Z))$. 
\section{Identifiability} \label{section:identification}

To identify our targets in terms of observable data, we make the following assumptions:

\begin{assumption}[Consistency]
\label{assumption:consistency}
A unit that receives treatment $a$ has outcome $Y = Y^a$.
\end{assumption}

\begin{assumption}[Ignorability]
\label{assumption:ignorability}
$(Y^0, Y^1) \perp A \mid X$. 
\end{assumption}

\begin{assumption}[Overlap]
\label{assumption:positivity}
$P(0 < P(A = 1 \mid X) < 1) = 1$. 
\end{assumption}

Assumption~\ref{assumption:consistency} requires that there is no interference between units; e.g., the potential outcome for a unit does not depend on the treatment assignment of other units. Assumption~\ref{assumption:ignorability} requires that the treatment assignment is as good as random conditional on measured covariates.
Assumption~\ref{assumption:positivity} requires that all covariate strata have some non-zero probability of receiving both treatment decisions.\footnote{For estimation we will require a stronger boundedness assumption on overlap.} In settings where it is unreasonable to make these assumptions, sensitivity analysis can be conducted in order to assess whether the results are sensitive to violations of the assumptions \citep{rosenbaum2010design, luedtke2015statistics, Robins(00)}.


\subsection{Point identification under random sampling}
If our data consists of random samples from the target distribution, then we can identify the population odds ratio, conditional odds ratio, arithmetic odds ratio, and geometric odds ratio respectively as

\begin{align} 
    OR &= \frac{\E[P(Y=1 \mid A = 1,X)]/\E[P(Y=0 \mid A = 1, X)]}{\E[P(Y=1 \mid A = 0, X)]/\E[P(Y=0 \mid A = 0, X)]} \\
    OR(x) &= 
    \frac{\mathrm{odds}(Y =1 \mid A = 1, X= x)}{\mathrm{odds}(Y=1 \mid A = 0, X =x)}\\
     \alpha &= \E \big[ OR(X) \big] \label{equation: identification of AOR under random sampling} \\
      \gamma &= \prod \Big\{OR(x)\Big\}^{dP(x)} \label{equation: identification of GOR under random sampling}
\end{align}

The conditional, marginal, and aggregated effects are all identified under random sampling. 
The next section will consider outcome-dependent sampling, where the conditional effect remains identified, but the marginal and aggregated effects are partially identified.

\subsection{Outcome-dependent sampling} \label{section: identification under outcome-dependent sampling}
When our samples $(Z_1, Z_2, ... Z_n)$ are drawn from the biased distribution $Q$, 
we can still identify the conditional odds ratio due to the symmetry 
of the odds ratio \citep{cornfield1951method}:  

\begin{align*}
    OR(x) &= \frac{\mathrm{odds}(A=1 \mid Y = 1,X=x)}{\mathrm{odds}(A=1 \mid Y = 0, X=x)}
\end{align*}

We cannot point identify the arithmetic odds ratio and geometric odds ratio under outcome-dependent sampling because we sample from $P(X \mid Y)$, not $P(X)$. Without prior knowledge about the outcome rate $\rho$, we cannot estimate $P(X)$. We can partially identify the aggregation measures as a function of the unknown parameter $\rho$. 

We first define additional notation for regression functions:
\begin{align*}
    \mu_a(x) &= Q(Y =1 \mid X = x, A = a) \\
    \nu_a(x) &= P(Y =1 \mid X = x, A = a). 
\end{align*}
$\mu_a(x)$ is point identified under outcome-dependent sampling. We can partially identify $\nu_a(x)$ as a function of the unknown $\rho$ by applying Bayes' Rules to obtain

\begin{align}
    \nu_a(x; \rho) = \frac{P(A = a \mid X = x, Y = 1) P(X = x \mid Y = 1)\rho}{P(A = a, X = x \mid Y =1) \rho + P(A = a, X = x \mid Y =0) (1-\rho) }
\end{align}

The population odds ratio is partially identified as 
\begin{align}  \label{equation: identification of marginal odds ratio}
   \frac{\frac{\rho \E[ \nu_1(X) \mid Y = 1] + (1-\rho) \E[ \nu_1(X) \mid Y = 0]}{\rho \E[ 1- \nu_1(X) \mid Y = 1] + (1-\rho) \E[ 1-\nu_1(X) \mid Y = 0]}}{\frac{\rho \E[ \nu_0(X) \mid Y = 1] + (1-\rho) \E[ \nu_0(X) \mid Y = 0]}{\rho \E[ 1- \nu_0(X) \mid Y = 1] + (1-\rho) \E[ 1-\nu_0(X) \mid Y = 0]}}.
\end{align}


The partial identification of the arithmetic odds ratio is 
\begin{align} \label{equation: target AOR in terms of py}
    \alpha(\rho)  = \rho \E[OR(X) \mid Y = 1]  + (1-\rho) \E[OR(X) \mid Y = 0].  
\end{align}

The geometric odds ratio is partially identified as 
\begin{align}\label{equation: target GOR in terms of py v1}
    \gamma(\rho)  = \Big( \prod_X OR(X)^{dP(X \mid Y = 1)} \Big)^{\rho}\Big( \prod_X OR(X)^{dP(X \mid Y = 0)} \Big)^{1-\rho} 
\end{align}
 
Or alternatively, 
\begin{align}\label{equation: target GOR in terms of py v2}
\begin{split}
    \gamma(\rho) &= \exp\Big( \rho~ \E[\log\big(OR(X)\big) \mid Y = 1] + (1-\rho) \E[\log\big(OR(X)\big) \mid Y = 0] \Big) \\
    &= \exp\Big( \rho~\E[\logit\big(\mu_1(X)\big)  - \logit\big(\mu_0(X)\big) \mid Y = 1] + (1-\rho)  \E[\logit\big(\mu_1(X)\big)  - \logit\big(\mu_0(X)\big) \mid Y = 0] \Big)
\end{split}
\end{align}

This identification indicates that $\E[\mathrm{logit}(\mu_a(X)) \mid Y = y]$ is a key object for the geometric odds ratio, and we will see later that this function plays a central role in the efficiency theory and estimation. We use the following to denote this function: 
 \begin{align} \label{equation: definition of y-conditional log gm target}
    \psi_{a,y} :=  \E[\mathrm{logit}(\mu_a(X)) \mid Y = y] 
 \end{align}

$\psi_{a,y}$ is identified under outcome-dependent sampling. To recap, Eqs.~\ref{equation: identification of marginal odds ratio}-\ref{equation: target GOR in terms of py v2} identify the marginal and aggregated odds ratios up to the unknown constant $\rho$.

\section{On collapsibility} \label{section: collapsibility}
  
  A desirable quality for a measure of effect is that the marginal effect describes the effect for a representative unit. The property of collapsibility (Def.~\ref{definition: collapsibility}) formalizes this quality \citep{whittemore1978collapsibility, greenland1999confounding}. 
 In this section, we take a detour from the biased sampling design to discuss collapsibility in detail. For simplicity our example will use random sampling, but the ideas are generally applicable.
  
  \begin{remark}
  In a departure from standard usage, we use the term       ``marginal'' to generically refer to a summary measure via an aggregation method that must be explicitly specified. As an example, the ``marginal odd'' of outcome $Y$ in standard usage unambiguously refers to $\frac{\p(Y=1)}{\p(Y=0)} = \frac{\E[\p(Y=1\mid X)]}{\E[\p(Y=0 \mid X)]}$. However, we denote this quantity as the marginal odds \emph{with respect to arithmetic aggregation}, differentiating it from other marginal measures such as, for example, the marginal odds with respect to geometric aggregation  $\frac{\prod \p(Y=1\mid X = x)^{dP(x)}}{\prod \p(Y=0\mid X = x)^{dP(x)}}$.
  \end{remark}

  Collapsibility is often discussed with respect to the arithmetic mean, under which collapsibility requires that we can specify weights for conditional effects such that the marginal effects equals their weighted average \citep{hernan2021causal}. For instance, the risk ratio, $RR = \E[Y^1]/\E[Y^0]$, is collapsible with respect to the arithmetic mean with weights $\frac{P(X)}{\E[Y^0]}\E[Y^0 \mid X]$.
The odds ratio, however, is not collapsible with respect to the arithmetic mean \citep{greenland1999confounding}. The population odds ratio is generally not equal to the conditional odds ratio, even if the conditional odds ratio is a constant. 

\paragraph{Example:} $X = \Ind{\mathrm{Female}}$ with $P(X =1) = 0.5$. 
The risks under treatment and no treatment for women and men are given in the below table along with the corresponding conditional risk ratios, $RR(X)$, and conditional odds ratios, $OR(X)$.
The marginal risk ratio with respect to arithmetic aggregation is 
$\frac{140}{1053} \approx 0.133$. Averaging the conditional risk ratios with weights $\frac{P(X)}{\E[Y^0]}\E[Y^0 \mid X]$ yields the marginal risk ratio under arithmetic aggregation.

    \begin{tabular}{c|c|c|c|c}
    X &    $P(Y = 1 \mid X, A = 1)$ &  $P(Y = 1 \mid X, A = 0)$ & RR(X) & OR(X) \\
    \hline 
      Female   & $\frac{1}{6}$ & $\frac{9}{10}$ &  $\frac{5}{27}$ & $\frac{1}{45}$\\[2mm]
      Male & $\frac{1}{26}$ & $\frac{9}{14}$ & $\frac{7}{117}$ & $\frac{1}{45}$
    \end{tabular}
    \vspace{3pt}\\
The conditional odds ratio for women equals the conditional odds ratio for men. 
However the marginal odds ratio under arithmetic aggregation is 
$\approx \frac{3}{2} *\frac{1}{45}$.
It is not possible to find a weighted average of the conditionals that equals the marginal odds ratio.
 
While the odds ratio is not collapsible under the arithmetic mean, it is collapsible under the geometric mean. To demonstrate this, we introduce additional notation. Let ${f(a, b) \colon \mathbb{R}^2 \mapsto \mathbb{R}}$ denote an effect contrast and let $g_{w(x)}(P) \colon \mathcal{P} \mapsto \mathbb{R}$ denote a statistical functional that aggregates $X \sim P$ with weighting function $w(x)$. For example, letting $p(x)$ denote the density of random variable $X \sim P$, we describe the average risk difference (commonly referred to as average treatment effect) by specifying $g_{p(x)}(P) = \int x p(x) dx$
and $f(a,b) = a - b$.  We consider aggregations that can be written as a Fréchet mean--that is, there is an associated distance function $d$ such that $$g_{w(x)}(P) = \displaystyle \argmin_{z \in \mathcal{X}} \int_{\mathcal{X}} w(x) d^2(z, x) dx.$$
For ease of notation, we will write $g(X)$ to indicate $g(P)$ for the distribution $P$ over $X$.

  \begin{definition} \label{definition: collapsibility}
  A contrast $f$ is collapsible with respect to aggregation method $g$ if \begin{align}
     f\Big(g_{p(x)}(\mu_1(X)), g_{p(x)}(\mu_0(X))\Big) = g_{w(x)}\Big(f(\mu_1(X), \mu_0(X))\Big)
  \end{align}
  for weights $w(x)$ in the probability simplex and where $p(x)$ denotes the density or pmf of $X \sim P$. 
  \end{definition}
  
   The left hand side describes the marginal effect--that is, the contrast of the aggregations of $\mu_a(x)$ for $a \in \{0,1\}$. The right hand side describes a weighted aggregation of the conditional contrasts.
  
  Returning to our example, the average risk difference is collapsible with respect to the arithmetic mean using as weights the density of $x$. 
  We briefly remark on the weights $p(x)$ and $w(x)$. While $p(x) = w(x)$ for the average risk difference, this need not be the case. The risk ratio has contrast $f(a,b) = \frac{a}{b}$ and is collapsible under aggregation $g_{w(x)}(P) = \int x w(x) dx$ with $w(x) = \frac{p(x)\E[Y^0 \mid X = x]}{\E[Y^0]}$. 

 As far as we are aware, this expanded definition of collapsibility that explicitly incorporates the aggregation method has not appeared in the literature before. With this machinery in place, we make the novel observation that the odds ratio is collapsible under geometric aggregation.

\begin{proposition}
The odds ratio is collapsible with respect to the geometric mean with weights $p(x)$.
\end{proposition}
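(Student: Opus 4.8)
The plan is to reduce collapsibility under the geometric mean to one elementary fact: geometric aggregation carries ratios of positive functions to ratios of their geometric means, because $\log$ is additive and $\E[\,\cdot\,]$ is linear. This is exactly the property that the arithmetic mean fails to have for the odds-ratio map, which is the source of the non-collapsibility displayed in the worked example.

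First I would record the Fréchet form of geometric aggregation: it is the Fréchet mean attached to $d(z,x)=|\log z-\log x|$, so writing the first-order condition for $g_{w(x)}(P)=\argmin_{z}\int w(x)\,(\log z-\log x)^2\,dx$ and using that $w$ integrates to one gives $g_{w(x)}(P)=\exp\!\big(\E_{w}[\log X]\big)$ --- equivalently, the product-integral form of Definition~\ref{definition: geometric odds ratio}. An immediate consequence, for any positive functions $h_1,h_0$ and any weight $w$ in the simplex, is the ratio-compatibility identity
\begin{align*}
 g_{w(x)}\big(h_1(X)/h_0(X)\big)=\exp\!\big(\E_{w}[\log h_1(X)-\log h_0(X)]\big)=\frac{g_{w(x)}(h_1(X))}{g_{w(x)}(h_0(X))}.
\end{align*}

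Next I would set up the odds-ratio contrast in this language. Writing $o_a(x)=\mu_a(x)/(1-\mu_a(x))$ for the conditional odds, the odds-ratio contrast is $f\big(\mu_1(x),\mu_0(x)\big)=\mathrm{OR}(x)=o_1(x)/o_0(x)$, with $\log\mathrm{OR}(x)=\logit\mu_1(x)-\logit\mu_0(x)$; and the geometric aggregation of the treatment-arm risks is carried by the marginal odds under geometric aggregation, $g_{p(x)}(o_a(X))=\exp\!\big(\E[\logit\mu_a(X)]\big)$, which is what plays the role of $g_{p(x)}(\mu_a(X))$ in Definition~\ref{definition: collapsibility}. Applying the ratio-compatibility identity with $h_1=o_1$, $h_0=o_0$, and $w=p$ then gives
\begin{align*}
 g_{p(x)}\big(\mathrm{OR}(X)\big)=\frac{g_{p(x)}(o_1(X))}{g_{p(x)}(o_0(X))}=f\Big(g_{p(x)}(\mu_1(X)),\,g_{p(x)}(\mu_0(X))\Big).
\end{align*}
The left-hand side is the weighted geometric aggregation of the conditional odds ratios with $w(x)=p(x)$ (the right side of Definition~\ref{definition: collapsibility}); the right-hand side is the odds-ratio contrast of the geometrically aggregated treatment and control risks (the left side of Definition~\ref{definition: collapsibility}). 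Since $p$ lies in the simplex, this is precisely collapsibility with weights $p(x)$.

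I do not expect a hard estimate here; the only care needed is bookkeeping. The main point to get right is which object plays the role of ``$g_{p(x)}(\mu_a(X))$'' for the odds ratio --- namely the probability whose odds equals the geometric mean of the conditional odds --- so that the two sides of Definition~\ref{definition: collapsibility} genuinely line up. It is also worth spelling out the contrast with the arithmetic case: there $\E[\,\cdot\,]$ does not commute with $(\mu_1,\mu_0)\mapsto o_1/o_0$, so no reweighting of the conditional odds ratios recovers the marginal one, which is exactly the non-collapsibility displayed by the $X=\Ind{\mathrm{Female}}$ table.
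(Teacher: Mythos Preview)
The paper states this proposition without proof, so there is nothing to compare against directly; your argument supplies what the paper omits. The mathematical core --- the ratio-compatibility identity $g_w(h_1/h_0)=g_w(h_1)/g_w(h_0)$ for the geometric mean, which is just linearity of $\E$ applied after taking logs --- is exactly right and is what drives the result. Applied twice (once to $o_1/o_0$ and once to each $o_a=\mu_a/(1-\mu_a)$), it gives
\[
g_{p(x)}\big(\mathrm{OR}(X)\big)=\frac{g_{p(x)}(\mu_1(X))\,/\,g_{p(x)}(1-\mu_1(X))}{g_{p(x)}(\mu_0(X))\,/\,g_{p(x)}(1-\mu_0(X))},
\]
which is the marginal odds ratio under geometric aggregation as the paper's Remark defines it.

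One point you flag but should state more sharply. Under a \emph{literal} reading of Definition~\ref{definition: collapsibility} --- with $f(a,b)=\tfrac{a/(1-a)}{b/(1-b)}$ applied to the scalars $g_{p}(\mu_a)=\exp\E[\log\mu_a(X)]$ --- the claim is actually false, because $1-g_p(\mu_a)\neq g_p(1-\mu_a)$ in general. (Take $X$ binary and equiprobable with $\mu_1\in\{0.2,0.8\}$, $\mu_0\in\{0.1,0.5\}$: the geometric mean of $\mathrm{OR}(X)$ is $3$, but the literal left-hand side is about $2.3$.) Your workaround --- letting $g_{p(x)}(\mu_a(X))$ denote ``the probability whose odds equals the geometric mean of the conditional odds'' --- makes the identity hold but is a touch circular. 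The reading that matches the paper's Remark and keeps your argument clean is to build the left-hand side from the marginal odds $g_p(\mu_a)/g_p(1-\mu_a)$, i.e., geometrically aggregate numerator and denominator of the odds separately; then your ratio-compatibility identity does all the work. I would make that interpretation explicit rather than leaving it as ``plays the role of.''
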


  If the conditional OR is a constant $c$, then the geometric odds ratio also equals $c$. Recall that this was not necessarily the case for the arithmetic odds ratio, which could take on a value other than $c$. Since the geometric mean exhibits the desirable property of collapsibility, the remainder of this paper will consider estimation of the geometric odds ratio. We focus on the outcome-dependent sampling design, deferring results for the random sampling design to the Appendix. To motivate our estimation approach, we start by studying the efficiency theory.

\section{Efficiency} \label{section: efficiency}
\subsection{Preliminaries}
First, we restate and define additional notation for the nuisance functions:
\begin{align*}
    \mu_a(x) &:= Q(Y =1 \mid X =x, A =a) \quad \mathrm{for} \ \  a \in \{0,1\}\\
    \pi_a(x) &:= Q(A =a \mid X =x) \quad \mathrm{for} \ \ a \in \{0,1\} \\ 
    \eta(x) &:= Q(Y =1 \mid X =x) = \sum_{a=0}^1 \pi_a(x) \mu_a(x).
\end{align*}

Recall from Section~\ref{section: identification under outcome-dependent sampling} (Eq.s~\ref{equation: target GOR in terms of py v2}-\ref{equation: definition of y-conditional log gm target}) that we can write our target geometric OR as a function of $\rho$ and $\psi_{a,y}$ for $(a,y) \in \{0,1\}^2$:
 \begin{align} \label{equation: gamma as function of psi}
     \gamma(\rho) =  \exp\Big( \rho \big(\psi_{1,1} - \psi_{0,1} \big) + (1-\rho)\big(\psi_{1,0} - \psi_{0,0} \big) \Big)
 \end{align}
    
We will first provide a von Mises-type expansion for $\psi_{a,y}$ using as an example $a=0$ and $y=1$ (see Appendix~\ref{appendix: general efficiency theory psi_ay} for the general result) and  subsequently provide the efficiency theory for $\gamma$. Functioning as a distributional analog to the Taylor expansion for real-valued functions, the von Mises-type expansion of a target parameter describes two key elements for efficiency theory: the influence function and a remainder term. Influence functions enable us to construct estimators with desirable properties, such as second-order bias, which can achieve fast convergence rates even in nonparametric settings. The remainder term plays an important role in characterizing the error of such estimators (see Section~\ref{section: estimation}). In a fully nonparametric model, the singular influence function is called the efficient influence function because it characterizes the efficiency bound in a local asymptotic minimax sense. The efficient influence function is therefore instructive for constructing optimal estimators. We direct the interested reader to \citet{bickel1993efficient, tsiatis2006semiparametric, kennedy2022semiparametric, hines2022demystifying} for more information on influence functions. 
We first define notation to refer to the nuisance functions on the distribution $\bar{Q}$: let $\bar{\eta}(x) := \bar{Q}(Y=1 \mid X=x)$ and similarly for $\bar{\pi}_0(x)$ and  $\bar{\mu}_0(x)$.

    \begin{lemma}
    \label{lemma: remainder term for psi_01}
    
    We have the following von Mises expansion for $\psi_{0,1}$:
    \begin{align*}
    \psi_{0,1}(Q) = \psi_{0,1}(\bar Q) + \int \varphi_{0,1}(\bar Q) d(Q-\bar Q) + R_2(\bar{Q}, Q&) \\
\mathrm{for} \quad     R_2(\bar{Q}, Q) =  \frac{\bar \omega - \omega}{\bar \omega} \big(\psi_{0,1}(Q) - \psi_{0,1}(\bar Q)\big) + \frac{1}{\bar \omega}\int  &\Big( \frac{\mu_0(x) - \bar \mu_0(x)}{\bar \mu_0(x)(1-\bar \mu_0(x))}\big( \eta(x) - \bar \eta(x) \big)  \\
        &+ \bar \eta(x) \frac{\mu_0(x) - \bar \mu_0(x)}{\bar \mu_0(x)(1-\bar \mu_0(x))} \frac{\bar \pi_0(x) - \pi_0(x)}{\bar \pi_0 (x)} \\
        &+ \eta(x)\frac{\mu^{*}_0(x) -1/2}{\mu^{*}_0(x)^2(1-\mu^{*}_0(x))^2}\big(\mu_0(x) - \bar \mu_0(x)\big)^2\Big)dQ
    \end{align*}
    
    where $\mu_0^{*}(x)$ lies between $\bar{\mu}_0(x)$ and $\mu_0(x)$  and 
    \begin{align*}
      \varphi_{0,1}(Z) &= \frac{\eta(X)\mathrm{logit}(\mu_0(X))}{\omega} - \psi_{0,1}  \\& +\frac{\eta(X)}{\omega} \frac{(1-A)(Y - \mu_0(X)}{\pi_0(X)\mu_0(X)(1-\mu_0(X))} + \frac{\mathrm{logit}(\mu_0(X))}{\omega}(Y- \eta(X)) \\
      &+ \psi_{0,1} - \frac{\psi_{0,1} Y}{\omega}.
    \end{align*}

    Since the remainder term $R_2(\bar{Q}, Q)$ is a second-order product of the nuisance function errors, we can apply Lemma 2 of \cite{kennedy2021semiparametric} to conclude that $\psi_{0,1}(Q)$ is pathwise differentiable with efficient influence function $\varphi_{0,1}(z; Q)$.
    \end{lemma}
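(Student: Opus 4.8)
The plan is to verify the stated von Mises identity by direct computation and then invoke the cited meta-lemma. First I would rewrite $\psi_{0,1}$ so that the product/quotient influence-function calculus applies: by iterated expectation $\E_Q[\logit(\mu_0(X))\Ind{Y=1}]=\E_Q[\eta(X)\logit(\mu_0(X))]$, hence $\psi_{0,1}(Q)=N(Q)/\omega$ with $N(Q):=\E_Q[\eta(X)\logit(\mu_0(X))]$. Treating $N$ as the $X$-marginal mean of the smooth map $(u,v)\mapsto v\,\logit(u)$ composed with the regressions $(\mu_0,\eta)$, its influence function is $\eta\logit(\mu_0)-N+\frac{\eta}{\mu_0(1-\mu_0)}\cdot\frac{(1-A)(Y-\mu_0)}{\pi_0}+\logit(\mu_0)(Y-\eta)$, and the quotient rule (with the influence function of $\omega$ being $Y-\omega$) then yields $\varphi_{0,1}$ exactly as written, after rewriting $-\frac{\psi_{0,1}}{\omega}(Y-\omega)=-\frac{\psi_{0,1}Y}{\omega}+\psi_{0,1}$. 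Equivalently — and this is all the write-up strictly needs — one checks directly that $\E_{\bar{Q}}[\varphi_{0,1}(Z;\bar{Q})]=0$, using $\E_{\bar{Q}}[(1-A)(Y-\bar{\mu}_0(X))\mid X]=0$, $\E_{\bar{Q}}[Y-\bar{\eta}(X)\mid X]=0$, $\E_{\bar{Q}}[Y]=\bar{\omega}$, and $\E_{\bar{Q}}[\bar{\eta}\logit(\bar{\mu}_0)]=\bar{\omega}\,\psi_{0,1}(\bar{Q})$.

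The heart of the argument is to show $R_2(\bar{Q},Q)=\psi_{0,1}(Q)-\psi_{0,1}(\bar{Q})-\int\varphi_{0,1}(z;\bar{Q})\,d(Q-\bar{Q})(z)$. Since $\varphi_{0,1}(\bar{Q})$ has $\bar{Q}$-mean zero, the integral equals $\E_Q[\varphi_{0,1}(Z;\bar{Q})]$, which I evaluate term by term with the $Q$-conditional-mean identities $\E_Q[(1-A)(Y-\bar{\mu}_0(X))\mid X]=\pi_0(X)(\mu_0(X)-\bar{\mu}_0(X))$, $\E_Q[Y-\bar{\eta}(X)\mid X]=\eta(X)-\bar{\eta}(X)$, $\E_Q[Y]=\omega$; the $\pm\psi_{0,1}(\bar{Q})$ constants drop out. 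On the other side, expanding $\frac1\omega=\frac1{\bar{\omega}}+\frac{\bar{\omega}-\omega}{\omega\bar{\omega}}$ in $\psi_{0,1}(Q)=\frac1\omega\E_Q[\eta\logit(\mu_0)]$ produces the $\frac{\bar{\omega}-\omega}{\bar{\omega}}(\psi_{0,1}(Q)-\psi_{0,1}(\bar{Q}))$ term of $R_2$ plus a $\frac{\omega}{\bar{\omega}}\psi_{0,1}(\bar{Q})$ that cancels the corresponding piece of $\E_Q[\varphi_{0,1}(\bar{Q})]$. What survives is $\frac1{\bar{\omega}}\E_Q[\eta\logit(\mu_0)-\bar{\eta}\logit(\bar{\mu}_0)-\logit(\bar{\mu}_0)(\eta-\bar{\eta})-\frac{\bar{\eta}\pi_0(\mu_0-\bar{\mu}_0)}{\bar{\pi}_0\bar{\mu}_0(1-\bar{\mu}_0)}]$; the first three terms telescope to $\eta(\logit(\mu_0)-\logit(\bar{\mu}_0))$. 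A first-order Taylor expansion of $\logit$ about $\bar{\mu}_0$ with Lagrange remainder, using $\logit'(u)=\frac1{u(1-u)}$ and $\frac12\logit''(u)=\frac{u-1/2}{u^2(1-u)^2}$, rewrites this as $\eta\frac{\mu_0-\bar{\mu}_0}{\bar{\mu}_0(1-\bar{\mu}_0)}+\eta\frac{\mu_0^{*}-1/2}{(\mu_0^{*})^2(1-\mu_0^{*})^2}(\mu_0-\bar{\mu}_0)^2$ with $\mu_0^{*}$ between $\bar{\mu}_0$ and $\mu_0$; combining the linear-in-$(\mu_0-\bar{\mu}_0)$ part with the $\pi_0$ term via $\eta-\frac{\bar{\eta}\pi_0}{\bar{\pi}_0}=(\eta-\bar{\eta})+\bar{\eta}\frac{\bar{\pi}_0-\pi_0}{\bar{\pi}_0}$ gives precisely the first two integrands of $R_2$, and the quadratic part is the third.

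It then remains to note that $R_2(\bar{Q},Q)$ is genuinely second order: each of its three integrands is a product of two factors drawn from $\{\mu_0-\bar{\mu}_0,\ \eta-\bar{\eta},\ \bar{\pi}_0-\pi_0\}$ (the last being $(\mu_0-\bar{\mu}_0)^2$), with multipliers that are bounded once $\bar{\mu}_0,\mu_0$ are bounded away from $0,1$ and $\bar{\pi}_0$ away from $0$; and the remaining summand is $(\bar{\omega}-\omega)$ times $\psi_{0,1}(Q)-\psi_{0,1}(\bar{Q})$, which is itself controlled by such errors. Lemma 2 of \citet{kennedy2021semiparametric} then applies verbatim, giving pathwise differentiability of $\psi_{0,1}$ with efficient influence function $\varphi_{0,1}$. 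The one real obstacle is the bookkeeping in the middle step: one must keep careful track of which of $Q$ and $\bar{Q}$ indexes each expectation and confirm that every term first order in $Q-\bar{Q}$ cancels exactly, so that only the stated second-order products survive — conceptually routine but error-prone.
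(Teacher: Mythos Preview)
Your proposal is correct and follows essentially the same approach as the paper: directly compute $\psi_{0,1}(Q)-\psi_{0,1}(\bar Q)-\int\varphi_{0,1}(\bar Q)\,d(Q-\bar Q)$, isolate the $\frac{\bar\omega-\omega}{\bar\omega}(\psi_{0,1}(Q)-\psi_{0,1}(\bar Q))$ piece, apply iterated expectation and a Lagrange-form Taylor expansion of $\logit$, and regroup into the three second-order products. The only organizational difference is that the paper routes through a general lemma for arbitrary twice-differentiable $f(\mu_a)$ and then specializes to $f=\logit$, whereas you work the $\logit$ case directly and additionally motivate the form of $\varphi_{0,1}$ via the product/quotient influence-function calculus; neither changes the substance of the argument.
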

    
    \begin{proof}
    Immediate from Lemma~\ref{lemma: remainder term for psi_ay} for $a = 0$ and $y=1$.
    \end{proof}

\subsection{Efficiency theory for $\gamma$}
\begin{theorem} \label{theorem: remainder term for target}

We have the following von Mises-type expansion of our target $\gamma$:
\begin{align*}
    \gamma(Q) &= \gamma(\bar{Q}) + \gamma(\bar{Q})\int \phi(\bar{Q})d(Q- \bar{Q})) + R_2(\bar{Q}, Q) \quad \mathrm{where} \\ 
        & \phi(Q) =  \rho \big( \varphi_{1,1}(Q) -  \varphi_{0,1}(Q)\big) + (1-\rho)\big( \varphi_{1,0}(Q) -  \varphi_{0,0}(Q)\big).
\end{align*}


Then, by Lemma 2 of \citep{kennedy2021semiparametric}, our target $\gamma(Q)$ is pathwise differentiable with influence function $\gamma(Q)\phi(z;Q)$.
\end{theorem}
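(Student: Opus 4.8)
The plan is to reduce everything to the already-established von Mises expansions for the $\psi_{a,y}$ and then transport the expansion through the exponential map with one exact second-order Taylor step. Writing $\gamma(Q) = \exp(L(Q))$ with
\begin{align*}
L(Q) := \rho\big(\psi_{1,1}(Q) - \psi_{0,1}(Q)\big) + (1-\rho)\big(\psi_{1,0}(Q) - \psi_{0,0}(Q)\big)
\end{align*}
as in \eqref{equation: gamma as function of psi}, I would first apply Lemma~\ref{lemma: remainder term for psi_01} (and its analogues for the remaining $(a,y)$ pairs) to each of the four terms and take the corresponding linear combination. Because a von Mises expansion is linear in the integrating measure, this yields
\begin{align*}
L(Q) = L(\bar Q) + \int \phi(\bar Q)\, d(Q - \bar Q) + R_2^{L}(\bar Q, Q),
\end{align*}
with $\phi$ exactly the contrast displayed in the statement and $R_2^{L} := \rho(R_2^{1,1} - R_2^{0,1}) + (1-\rho)(R_2^{1,0} - R_2^{0,0})$ a finite linear combination of the individual remainders, hence itself a second-order remainder (a sum of integrated products of nuisance-function errors).

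Next I would differentiate through $\exp$. By an exact second-order Taylor expansion of $t \mapsto e^{t}$ about $L(\bar Q)$ there is a point $\xi$ between $L(\bar Q)$ and $L(Q)$ with
\begin{align*}
\gamma(Q) = \gamma(\bar Q) + \gamma(\bar Q)\big(L(Q) - L(\bar Q)\big) + \tfrac{1}{2} e^{\xi}\big(L(Q) - L(\bar Q)\big)^{2}.
\end{align*}
Substituting $L(Q) - L(\bar Q) = \int \phi(\bar Q)\, d(Q - \bar Q) + R_2^{L}$ and peeling off the constant term $\gamma(\bar Q)$ together with the linear term $\gamma(\bar Q)\int \phi(\bar Q)\, d(Q - \bar Q)$ gives the claimed expansion, with
\begin{align*}
R_2(\bar Q, Q) = \gamma(\bar Q)\, R_2^{L}(\bar Q, Q) + \tfrac{1}{2} e^{\xi}\Big(\int \phi(\bar Q)\, d(Q - \bar Q) + R_2^{L}(\bar Q, Q)\Big)^{2}.
\end{align*}

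It then remains to verify that this $R_2$ is genuinely second-order, so that Lemma 2 of \citet{kennedy2021semiparametric} applies and (as the chain rule suggests) identifies $\gamma(Q)\phi(z;Q)$ as the influence function. The first summand is $\gamma(\bar Q)$ times the second-order remainder $R_2^{L}$. The second summand is $\tfrac{1}{2} e^{\xi}\big(L(Q) - L(\bar Q)\big)^{2}$, and since $L(Q) - L(\bar Q)$ is a first-order linear functional of $Q - \bar Q$ plus a second-order remainder, its square is quadratic in the perturbation --- a product of first-order quantities, with cross terms against $R_2^{L}$ of still higher order --- hence second-order; the Lagrange factor $e^{\xi}$ is uniformly bounded because the boundedness/overlap conditions force $\mathrm{logit}(\mu_a(X))$, and therefore each $\psi_{a,y}$ and $L$, into a fixed compact set. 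The step I expect to be most delicate is precisely this last bookkeeping: checking that expanding the square does not smuggle in a first-order contribution and that $e^{\xi}$ is controlled, both of which rest on the same boundedness assumptions already used to derive the $\psi_{a,y}$ theory. Everything else is a mechanical consequence of the linearity of the von Mises expansion and Lemma~\ref{lemma: remainder term for psi_01}.
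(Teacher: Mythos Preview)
Your proposal is correct and follows essentially the same approach as the paper: apply Taylor's theorem (with exact Lagrange remainder) to $\exp$ at $L(\bar Q)=\log\gamma(\bar Q)$, then substitute the von Mises expansion of $L(Q)-L(\bar Q)$ obtained by linearly combining the four $\psi_{a,y}$ expansions from Lemma~\ref{lemma: remainder term for psi_ay}. Your discussion of why the squared term and the Lagrange factor $e^{\xi}$ are second-order and bounded is, if anything, more explicit than the paper's own terse treatment of the remainder.
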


\begin{proof} \label{proof of corollary: remainder term for target}

For a $\gamma^{*}$ such that $\log(\gamma^{*})$ lies between $\log(\gamma(\bar{Q}))$ and $\log(\gamma(Q))$, applying Taylor's Theorem yields
\begin{align} \label{equation: taylors on target}
    \gamma(Q) &= \gamma(\bar{Q}) + \gamma(\bar{Q})\Big(\log\big(\gamma(Q)\big) - \log\big(\gamma(\bar{Q})\big)\Big) + \frac{1}{2} \Big(\log\big(\gamma(Q)\big) - \log\big(\gamma(\bar{Q})\big)\Big)^2\gamma^{*}.
\end{align}

For the first order expression $\log(\gamma(Q)) - \log(\gamma(\bar{Q}))$ we apply Lemma~\ref{lemma: remainder term for psi_ay} to obtain
\begin{align*}
\log(\gamma(Q)) - \log(\gamma(\bar{Q})) 
  &= \rho \Bigg(\int \varphi_{1,1}(\bar Q)d(Q- \bar Q) - \int \varphi_{0,1}(\bar Q)d(Q- \bar Q))\Bigg)
  \\&+ (1-\rho) \Bigg(\int \varphi_{1,0}(\bar Q)d(Q- \bar Q) - \int \varphi_{0,0}(\bar Q)d(Q- \bar Q))\Bigg) + R_2(\bar Q, Q)
\end{align*}
 where each term in the $R_2(\bar{Q}, Q)$ is a second-order nuisance function error. 

Substituting back into Eq.~\ref{equation: taylors on target} yields


\begin{align*}
    \gamma(Q) 
    &= \gamma(\bar{Q}) + \gamma(\bar{Q})\int \phi(\bar{Q})d(Q- \bar{Q})) + R_2(\bar{Q}, Q) \quad \mathrm{where} \\
    & \phi(Q) =  \rho \big( \varphi_{1,1}(Q) -  \varphi_{0,1}(Q)\big) + (1-\rho) \big( \varphi_{1,0}(Q) -  \varphi_{0,0}(Q)\big)
\end{align*}
\end{proof}

For quick reference we will restate the influence function of $\gamma$ using notation that defines $\eta_{y}(x) := Q(Y=y \mid x)$ and $\omega_y := Q(Y=y)$:

 \begin{align} \label{equation: influence function gamma}
    \mathrm{IF}(\gamma) &=  \Big(\rho \big( \varphi_{1,1}(Z) -\varphi_{0,1}(Z) \big) + (1-\rho) \big( \varphi_{1,0}(Z) -\varphi_{0,0}(Z) \big) \Big)\gamma 
 \end{align}
 where $\varphi_{a,y}(z;Q) = \mathrm{IF}(\psi_{a,y}(z;Q))$
 \begin{align} \label{equation: influence function of psi a y}
 \begin{split}
     = \frac{\mathrm{logit}(\mu_a(X))}{\omega_y}\Ind{Y=y} &- \psi_{a,y}  +\frac{\eta_{y}(X)}{\omega_y} \frac{\Ind{A= a}(Y - \mu_a(X))}{\pi_a(X)\mu_a(X)(1-\mu_a(X))}  \\
  &+ \psi_{a,y} - \frac{\psi_{a,y} \Ind{Y= y}}{\omega_y}
  \end{split}
\end{align}

The influence function for $\gamma$ indicates that in addition to requiring that the propensity scores be bounded away from zero and one, we will additionally require that the conditional variances $(1-\mu_1(x))\mu_1(x)$ and $(1-\mu_0(x)) \mu_0(x)$ be bounded away from zero. This is notably different from the usual risk difference (ATE) setting, where we want the conditional variances to be small to improve efficiency.

\subsubsection*{Efficiency bound}
The efficiency bound describes the local asymptotic minimax lower bound on the mean squared error for any estimator of the target parameter, analogous to the Cramer-Rao bound for parametric settings. This bound provides a benchmark against which we can compare estimators. Additionally, the efficiency bound illuminates which components affect the difficulty of the estimation problem. For additional details we refer the reader to \citet{bickel1993efficient, van2003unified, tsiatis2006semiparametric, kennedy2022semiparametric}.
Before stating our efficiency bound, we introduce notation that will simplify the result. 
We define the distribution-corrected log conditional odds ratio:
\begin{align} \label{equation: definition of big PSI}
     \Psi(X,Y) &=  \Big(Y\frac{\rho}{\omega} + (1-Y)\frac{1-\rho}{1-\omega} \Big)\Big( \logit(\mu_1(X))-  \logit(\mu_0(X))\Big) 
\end{align}

and the distribution-corrected aggregate log odds ratio:
\begin{align}\label{equation: definition of big PSI STAR}
    \Psi^{*}(Y) &= Y\frac{\rho}{\omega} (\psi_{1,1} -\psi_{0,1}) + (1-Y)\frac{1-\rho}{1-\omega} (\psi_{1,0} -\psi_{0,0})
\end{align}

\begin{theorem} \label{theorem: efficiency bound for variance of influence function for geometric mean} The nonparametric efficiency bound for estimating $\gamma$ is given by $\sigma^2 := \var(\IF(\gamma)) = \gamma^2 \var(\IF(\log(\gamma)))$, where $\var(\IF(\log(\gamma)))$ equals 

\begin{align*}
    &\var(\Psi(X,Y)) + \var(\Psi^{*}(Y))-2~\mathrm{cov}\Big(\Psi(X,Y), \Psi^{*}(Y) \Big) + \\
 & \E \Bigg[ \Bigg( \frac{1}{\pi_1(X)\mu_1(X)(1-\mu_1(X))} +  \frac{1}{\pi_0(X)\mu_0(X)(1-\mu_0(X))} \Bigg)\Bigg( \Bigg(\frac{\rho -\omega}{\omega(1-\omega)} \Bigg)\eta(X) +\frac{1-\rho}{1-\omega} \Bigg)^2  \Bigg]
\end{align*}
\end{theorem}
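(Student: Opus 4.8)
The first equality needs no work: Theorem~\ref{theorem: remainder term for target} gives $\IF(\gamma)=\gamma\,\phi(Z)$, where $\phi(Z):=\IF(\log\gamma)=\rho(\varphi_{1,1}(Z)-\varphi_{0,1}(Z))+(1-\rho)(\varphi_{1,0}(Z)-\varphi_{0,0}(Z))$, and since $\gamma$ is a fixed scalar, $\var(\IF(\gamma))=\gamma^2\var(\phi)=\gamma^2\var(\IF(\log\gamma))$. So the whole task is to compute $\var(\phi)$, and the nonparametric bound is then $\gamma^2\var(\phi)$ because $\gamma\,\phi$ is the efficient influence function (via Lemma 2 of \citealp{kennedy2021semiparametric}).

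The plan is to first rewrite $\phi$ so that the three pieces of the bound appear. Substituting the explicit form of $\varphi_{a,y}$ from Eq.~\ref{equation: influence function of psi a y}, the $-\psi_{a,y}$ and $+\psi_{a,y}$ summands cancel, and grouping the surviving terms by type (using $\omega_1=\omega$, $\omega_0=1-\omega$, $\eta_1(x)=\eta(x)$, $\eta_0(x)=1-\eta(x)$) gives $\phi(Z)=\Psi(X,Y)-\Psi^{*}(Y)+h(X)\,B(Z)$, where the ``$\logit$'' terms $\tfrac{\logit(\mu_a(X))\Ind{Y=y}}{\omega_y}$ assemble into $\Psi(X,Y)$ of Eq.~\ref{equation: definition of big PSI}, the centering terms $-\tfrac{\psi_{a,y}\Ind{Y=y}}{\omega_y}$ assemble into $\Psi^{*}(Y)$ of Eq.~\ref{equation: definition of big PSI STAR}, and the score/correction terms assemble into $h(X)B(Z)$ with
\begin{align*}
 h(X) &= \frac{\rho-\omega}{\omega(1-\omega)}\eta(X)+\frac{1-\rho}{1-\omega},\\
 B(Z) &= \frac{\Ind{A=1}(Y-\mu_1(X))}{\pi_1(X)\mu_1(X)(1-\mu_1(X))}-\frac{\Ind{A=0}(Y-\mu_0(X))}{\pi_0(X)\mu_0(X)(1-\mu_0(X))};
\end{align*}
the form of $h$ comes from simplifying $\rho\eta/\omega+(1-\rho)(1-\eta)/(1-\omega)$, and the common bracket in the two score terms of $\varphi_{1,y}-\varphi_{0,y}$ does not depend on $y$, which is what lets them merge into the single factor $B(Z)$.

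Next I would use that $\phi$ is mean zero, so $\var(\phi)=\E[\phi^2]$. Since $\E[B\mid X,A]=0$ (Bernoulli mean) and $h$ is $X$-measurable, $\E[hB]=0$, hence $\E[\Psi-\Psi^{*}]=0$ and $\E[(\Psi-\Psi^{*})^2]=\var(\Psi)+\var(\Psi^{*})-2\cov(\Psi,\Psi^{*})$, which is the first line of the bound. Expanding $\phi^2=(\Psi-\Psi^{*})^2+2(\Psi-\Psi^{*})hB+h^2B^2$, the last term is handled by noting $\Ind{A=1}\Ind{A=0}=0$ (so the cross term in $B^2$ drops) and computing, via the Bernoulli variance $\E[(Y-\mu_a(X))^2\mid X,A=a]=\mu_a(X)(1-\mu_a(X))$, that $\E[B^2\mid X]=\tfrac1{\pi_1(X)\mu_1(X)(1-\mu_1(X))}+\tfrac1{\pi_0(X)\mu_0(X)(1-\mu_0(X))}$; hence $\E[h^2B^2]$ is exactly the expectation in the second line of the bound. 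It remains to kill the cross term $2\,\E[(\Psi-\Psi^{*})hB]$: since $(\Psi-\Psi^{*})h$ is a function of $(X,Y)$, it suffices to show $\E[B\mid X,Y]=0$. This follows from Bayes' rule, $Q(A=a\mid X,Y=1)=\pi_a(X)\mu_a(X)/\eta(X)$ and $Q(A=a\mid X,Y=0)=\pi_a(X)(1-\mu_a(X))/(1-\eta(X))$, which makes the two halves of $\E[B\mid X,Y]$ reduce to $\pm1/\eta(X)$ (resp.\ $\pm1/(1-\eta(X))$) and cancel. Adding the three contributions yields the stated formula.

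The main obstacle is the algebraic regrouping in the second step: one must match the combined terms precisely to the definitions of $\Psi$ and $\Psi^{*}$ and recover $h(X)$ in exactly the form it appears in the bound, keeping the $\rho$-versus-$\omega$ and $\eta$-versus-$(1-\eta)$ bookkeeping straight. Paired with this is the identity $\E[B\mid X,Y]=0$, which is the non-obvious fact that makes the cross term vanish and is the reason the variance decomposes so cleanly into a ``$\logit(\mu)$-contrast'' part plus an inverse-conditional-variance weighting term; once both are in place, the remaining moment calculations are routine.
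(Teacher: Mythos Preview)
Your proposal is correct and follows essentially the same route as the paper: decompose $\phi$ into $\Psi(X,Y)-\Psi^{*}(Y)$ plus the score piece $h(X)B(Z)$, compute the variance of the score piece via the Bernoulli conditional variance and $\Ind{A=1}\Ind{A=0}=0$, and show the cross term vanishes. The only cosmetic difference is in that last step: the paper conditions on $X$ and uses the identities $\E[Y\,\Ind{A=a}(Y-\mu_a(X))\mid X]=\mu_a(X)(1-\mu_a(X))\pi_a(X)=-\E[(1-Y)\,\Ind{A=a}(Y-\mu_a(X))\mid X]$ together with the $Y/(1-Y)$ linearity of $\Psi-\Psi^{*}$, whereas you establish the slightly stronger fact $\E[B\mid X,Y]=0$ via Bayes' rule, which kills the cross term for any $(X,Y)$-measurable multiplier; both arguments are short and equivalent in content.
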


The proof is given in Appendix~\ref{appendix: proof of efficiency bound}.

The two terms that involve $\Psi^{*}(Y)$ result from having to estimate $\omega$; if $\omega$ is known by sampling design, then these terms drop from the bound. The coefficient $\big(\frac{\rho -\omega}{\omega(1-\omega)} \big)\eta(X) +\frac{1-\rho}{1-\omega} $ results from sampling bias. When $\rho=\omega$ this coefficient equals 1.\footnote{Compare to the bound under random sampling given in Appendix~\ref{section: efficiency under random sampling}}

Theorem~\ref{theorem: efficiency bound for variance of influence function for geometric mean} indicates that the difficulty of our estimation problem depends on the following factors:
\begin{enumerate}
    \item Heterogeneity in the distribution-corrected log conditional odds ratio $\Psi(X,Y)$;
    \item Heterogeneity in the distribution-corrected aggregate log odds ratio $\Psi^{*}(Y)$;
    \item Covariance in the distribution-corrected  conditional and aggregated log odds ratios;
    \item Propensity scores $\pi(x)$;
    \item Regression function variances $\muone(x)(1-\muone(x))$ and $\muz(x)(1-\muz(x))$; 
    \item Contrasts between the outcome rate in the target and sampled distributions, including the difference $\rho-\omega$ and the ratio $\frac{1-\rho}{1-\omega}$;
    \item Variance in the outcome in the sampled distribution, $\omega(1-\omega)$;
    \item Conditional outcome rates in the sampling distribution, $\eta(x)$.
\end{enumerate}
The second line of the efficiency bound shows the efficiency bound \emph{decreases} with the regression function variances, which is notably different from the average risk difference efficiency bound $\sigma^2_{ARD}$ \citep{hahn1998role}:
\begin{align*}
    \sigma^2_{ARD} = \E\Bigg[\frac{\muone(X)(1-\muone(X))}{\pi(X)} + \frac{\muz(X)(1-\muz(X))}{1-\pi(X)} + (\muone(X) - \muz(X) - \E[\muone(X) - \muz(X)])^2\Bigg]
\end{align*}

\section{Estimation} \label{section: estimation}
We propose a doubly robust style estimator for $\gamma$ that relies on estimation of $\log(\gamma)$. Before providing the error analysis of our proposed estimator, we briefly remark on the role of sample splitting. Estimating our nuisance functions on a separate sample, which we denote by $\hat Q$, that is independent of the sample denoted by $\qn$, enables us to avoid overfitting without having to rely on empirical process conditions.
With iid data, we can obtain these independent samples simply by randomly partitioning the data into two or more folds. More generally, one can use cross-fitting, a procedure which swaps the samples and averages the results, to regain sample efficiency  \citep{robins2008higher, zheng2010asymptotic, chernozhukov2018generic}. 
For simplicity we present our analysis under single sample splitting. We note that the outcome rate $\omega$ can be estimated on the full data sample.

\begin{theorem} \label{Theorem: Error of estimator}
Define the estimator for $\psi_{a,y}$ as

\begin{align}\label{equation: proposed estimator for psi_ay}
\begin{split}
\hat \psi_{a,y} &:= \qn(\phi_{a,y}(Z; \hat{\mu}_a, \hat{\eta}, \hat{\pi}_a)) \quad \quad \mathrm{where} \\
\phi_{a,y}(Z; \hat{\mu}_a, \hat{\eta}, \hat{\pi}_a) 
&= \frac{\logit(\hat \mu_a(X))~\ind\{Y = y\}}{y\hat \omega +(1-y)(1-\hat \omega)} \\& +\Big(y\frac{\hat \eta(X)}{\hat \omega} + (1-y)\frac{1-\hat \eta(X)}{1-\hat \omega} \Big) \frac{\ind\{A =a\}(Y -  \hat \mu_a(X))}{ \hat \pi_a(X) \hat \mu_a(X) \big(1- \hat \mu_a(X) \big)} 
\end{split}
\end{align}

Assume the identification assumptions (\ref{assumption:consistency}-\ref{assumption:positivity}) hold 
and additionally assume the following five conditions hold. \\
\begin{enumerate}
    \item \emph{Convergence in probability in $L_2(\p)$ norm:} $\norm{\phi_{a,y} - \phih_{a,y}} = o_{\p}(1)$ . 
    \item \emph{Sample-splitting:} Nuisance functions $\hat{\pi}_1$, $\hat{\eta}$, $\hat{\mu}_1$, and $\hat{\mu}_0$ are estimated on $\hat Q$.\footnote{One could avoid the nuisance function estimation for $\eta$ since $\eta(x) = \pi_1(x)\mu_1(x) + \pi_0(x)\mu_0(x)$. However, in some cases it may be easier to estimate $\eta(x)$ directly than it is to estimate $\mu_a(x)$ or $\pi_a(x)$.}
    \end{enumerate}
And for some $\epsilon \in (0,1)$, 
\begin{enumerate}[resume]
    \item \emph{Strong overlap:} $Q(\epsilon < \pi_a(X) ) = 1$  and $Q(\epsilon < \pih_a(X)) = 1$ .
    \item \emph{Outcome variance:} $Q(\mu_a(X)(1-\mu_a(X)) > \epsilon) = 1$ and $Q(\hat{\mu}_a(X)(1-\hat{\mu}_a(X)) > \epsilon) = 1$ for $a \in \{0,1\}$. 
    \item \emph{Outcome base rate:} $0 < \epsilon < \omega < 1-\epsilon$ and $0 < \epsilon < \hat \omega < 1-\epsilon$. 
    \end{enumerate}

Then the proposed estimator satisfies
\begin{align*}
\hat{\psi}_{a,y} - \psi_{a,y} &=  O_{\p}\Big(\left|\hat \omega - \omega \right|^2  + \norm{\hat \eta - \eta}\norm{\hat \mu_a - \mu_a} + \norm{\hat{\pi}_a - \pi_a}\norm{\hat{\mu}_a -\mu_a} + \norm{\hat{\mu}_a -\mu_a}^2 \Big)  \\
&+ (Q_n- Q) \Big(\phi_{a,y}(Z; Q) -\frac{\Ind{Y=y}\psi_{a,y}}{y\omega + (1-y)(1-\omega)}\Big) + o_{\p}\Big(\frac{1}{\sqrt{n}}\Big)
\end{align*}

\end{theorem}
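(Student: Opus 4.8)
The plan is to analyze the error $\hat\psi_{a,y} - \psi_{a,y}$ by feeding the estimator $\hat\psi_{a,y} = Q_n(\phi_{a,y}(Z;\hat Q))$ into the von Mises-type expansion of Lemma~\ref{lemma: remainder term for psi_ay} (specialized, e.g., in Lemma~\ref{lemma: remainder term for psi_01}). First I would write the standard decomposition for influence-function-based estimators: $\hat\psi_{a,y} - \psi_{a,y} = (Q_n - Q)(\phi_{a,y}(Z;\hat Q)) + [\psi_{a,y}(\hat Q) - \psi_{a,y}(Q) + \int \phi_{a,y}(\hat Q)\, d(Q - \hat Q)] + [\text{centering corrections}]$. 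The middle bracket is exactly the second-order remainder $R_2(\hat Q, Q)$ from the von Mises expansion, and the first term is the empirical process term evaluated at the estimated nuisances. Because of sample splitting (Condition~2), I can condition on the nuisance-training sample $\hat Q$ and treat $\phi_{a,y}(\cdot;\hat Q)$ as a fixed function, so $(Q_n - Q)(\phi_{a,y}(Z;\hat Q)) = (Q_n-Q)(\phi_{a,y}(Z;Q)) + (Q_n-Q)(\phi_{a,y}(Z;\hat Q) - \phi_{a,y}(Z;Q))$; the first piece is the $\sqrt n$-term that appears in the statement, and the second is $o_{\p}(1/\sqrt n)$ by a conditional Chebyshev argument using Condition~1 ($L_2(P)$ consistency) together with Conditions~3--5, which keep the variance of the difference finite (the denominators $\pi_a$, $\mu_a(1-\mu_a)$, $\omega(1-\omega)$ are all bounded away from zero for both the true and estimated nuisances).

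Next I would bound the remainder term $R_2(\hat Q, Q)$. From the explicit form in Lemma~\ref{lemma: remainder term for psi_01}, $R_2$ is a sum of: (i) a term $\frac{\hat\omega - \omega}{\hat\omega}(\psi_{a,y}(Q) - \psi_{a,y}(\hat Q))$, which is $O_{\p}(|\hat\omega - \omega|^2)$ once one notes $\psi_{a,y}(Q) - \psi_{a,y}(\hat Q)$ is itself $O_{\p}(|\hat\omega-\omega| + \text{nuisance errors})$ — I would actually argue this cross term is dominated by the $|\hat\omega-\omega|^2$ and the product terms already listed; (ii) a product $\|\hat\eta - \eta\|\,\|\hat\mu_a - \mu_a\|$ coming from the $\frac{\mu_0 - \hat\mu_0}{\hat\mu_0(1-\hat\mu_0)}(\eta - \hat\eta)$ piece, controlled by Cauchy--Schwarz plus the boundedness of $\hat\mu_a(1-\hat\mu_a)$ away from zero (Condition~4); (iii) a product $\|\hat\pi_a - \pi_a\|\,\|\hat\mu_a - \mu_a\|$ from the term with $\frac{\hat\pi_0 - \pi_0}{\hat\pi_0}$, using Conditions~3 and~4 and boundedness of $\bar\eta$; and (iv) a quadratic $\|\hat\mu_a - \mu_a\|^2$ from the Taylor-remainder term with $\mu_a^*$, using that $\mu_a^*$ lies between $\hat\mu_a$ and $\mu_a$ so its variance-type denominator is bounded below by Condition~4. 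Summing gives exactly the $O_{\p}$ rate in the statement.

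Finally I would handle the centering/re-expression. The estimator uses $\phi_{a,y}(Z;\hat Q)$ which is the influence function \emph{without} the $-\psi_{a,y} + \psi_{a,y}$ self-cancelling pieces but \emph{with} the $\frac{\logit(\hat\mu_a)\Ind{Y=y}}{\hat\omega_y}$ and the weighted correction term; comparing Eq.~\ref{equation: proposed estimator for psi_ay} with $\varphi_{a,y}$ in Eq.~\ref{equation: influence function of psi a y}, the difference is precisely a $-\frac{\Ind{Y=y}\psi_{a,y}}{\omega_y}$ term plus the two constant $\pm\psi_{a,y}$ terms that vanish, which is why the empirical-process term in the conclusion is written as $(Q_n - Q)(\phi_{a,y}(Z;Q) - \frac{\Ind{Y=y}\psi_{a,y}}{y\omega + (1-y)(1-\omega)})$ rather than $(Q_n-Q)(\varphi_{a,y})$. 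I would also need to absorb the discrepancy between $\hat\omega$ and $\omega$ appearing in the estimator's own denominators into the $O_{\p}(|\hat\omega-\omega|^2)$ and lower-order terms via a one-term Taylor expansion in $\omega$. The main obstacle, I expect, is item~(i) above: showing that the cross term $\frac{\hat\omega-\omega}{\hat\omega}(\psi_{a,y}(Q)-\psi_{a,y}(\hat Q))$ truly collapses into the stated rate — this requires an auxiliary expansion of $\psi_{a,y}(Q) - \psi_{a,y}(\hat Q)$ in the nuisance errors and care that no first-order-in-$(\hat\omega-\omega)$ contribution survives, which is the one place the bookkeeping is genuinely delicate rather than a routine Cauchy--Schwarz.
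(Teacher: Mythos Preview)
Your overall strategy---the three-term decomposition into an empirical-process piece, a bias/remainder piece, and a sample-splitting argument to kill the cross term---is exactly the paper's route (the paper packages it as ``apply the general result for smooth $f(\mu_a)$ with $f=\logit$'', whose proof in the appendix is precisely this decomposition plus Cauchy--Schwarz on the remainder).

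The gap is in your handling of $\hat\omega$. You assert that the middle bracket $[\psi_{a,y}(\hat Q)-\psi_{a,y}(Q)+\int\phi_{a,y}(\hat Q)\,d(Q-\hat Q)]$ equals $-R_2(\hat Q,Q)$ and that any residual $\hat\omega$-discrepancy can be absorbed into $O_{\p}(|\hat\omega-\omega|^2)$ via a one-term Taylor expansion. Neither holds. The von Mises expansion of Lemma~\ref{lemma: remainder term for psi_01} is written in terms of the \emph{centered} influence function $\varphi_{a,y}$, whereas the estimator uses the uncentered $\phi_{a,y}=\varphi_{a,y}+\tfrac{\psi_{a,y}(\hat Q)\,\Ind{Y=y}}{\hat\omega_y}$; substituting, your middle bracket equals $-R_2(\hat Q,Q)+\tfrac{\psi_{a,y}(\hat Q)(\omega_y-\hat\omega_y)}{\hat\omega_y}$. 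Combining this extra piece with your item~(i) yields $-\tfrac{\hat\omega_y-\omega_y}{\hat\omega_y}\psi_{a,y}(Q)$, which is \emph{first order} in $\hat\omega-\omega$ and cannot be pushed into $O_{\p}(|\hat\omega-\omega|^2)$. The ingredient you are missing is the identity $\hat\omega_y-\omega_y=(Q_n-Q)(\Ind{Y=y})$ (since $\hat\omega$ is itself a sample mean), which converts this first-order bias into a sample-average term:
\[
-\frac{\hat\omega_y-\omega_y}{\hat\omega_y}\,\psi_{a,y}(Q)\;=\;-(Q_n-Q)\Big(\frac{\Ind{Y=y}\,\psi_{a,y}}{\omega_y}\Big)+O_{\p}\big(|\hat\omega-\omega|^2\big).
\]
This is the actual source of the $-\tfrac{\Ind{Y=y}\psi_{a,y}}{y\omega+(1-y)(1-\omega)}$ inside the theorem's empirical-process term; it is not merely a notational rewriting of $\varphi$ versus $\phi$, as your centering paragraph suggests. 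Once this identity is used, your item~(i) is harmless (it is already a product of two first-order errors), and the ``main obstacle'' you anticipate---showing no first-order-in-$(\hat\omega-\omega)$ contribution survives---is resolved not by making it vanish but by recognizing it as part of the $(Q_n-Q)$ term.
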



Theorem~\ref{Theorem: Error of estimator} demonstrates that our proposed estimator has second-order errors in the nuisance estimation errors, yielding ``doubly-fast'' rates. That is, we obtain a faster rate for our estimator even when estimating the nuisance function at slower rates. For example, to obtain $n^{-1/2}$ rates for our estimator, it is sufficient to estimate the nuisance functions at $n^{-1/4}$, allowing us to use flexible machine learning methods to nonparametrically estimate the nuisance functions under smoothness or sparsity assumptions.
Since our error involves squared terms, this is not the usual double-robustness property that guarantees fast rates when \emph{either} of the propensity or regression function is estimated at fast rates. 

\begin{proof}
Apply Theorem~\ref{Theorem: Error of generic function of regression functions} with $f(\mu_a(x)) =  \logit(\mu_a(x))$.

\end{proof}

\begin{corollary}\label{corollary: asymptotic normality of log estimator}
The estimator $\hat{\psi}_{a,y}$ is $\sqrt{n}$-consistent and asymptotically normal 
under the assumptions in Theorem~\ref{Theorem: Error of estimator} and the following conditions:
\begin{enumerate}
    \item $\norm{\hat{\pi}_a -\pi_a} = O_{\p}(n^{-1/4})$
    \item $\norm{\hat{\mu}_a -\mu_a} = o_{\p}(n^{-1/4})$ for $a \in \{0,1\}$
     \item $\norm{\hat{\eta} -\eta} = O_{\p}(n^{-1/4})$
     \item $\left|\hat{\omega} -\omega\right| = o_{\p}(n^{-1/4})$ 
\end{enumerate}

The limiting distribution is $\sqrt{n}(\hat{\psi}_{a,y} - \psi_{a,y}) \rightsquigarrow \mathcal{N}\Big(0, \var\big(\IF(\psi_{a,y})\big)\Big)$ where $\var\big(\IF(\psi_{a,y})\big) = \var\Big(\phi_{a,y}(Z) - \frac{\Ind{Y= y}\psi_{a,y}}{\omega y + (1-\omega)(1-y)}\Big)$.
\end{corollary}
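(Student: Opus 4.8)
The plan is to start from the error decomposition already established in Theorem~\ref{Theorem: Error of estimator} and show that, under the four stated rate conditions, the second-order bias terms are $o_\p(n^{-1/2})$, so that the estimator is asymptotically linear with the claimed influence function. Concretely, Theorem~\ref{Theorem: Error of estimator} gives
\[
\hat{\psi}_{a,y} - \psi_{a,y} = \mathrm{Bias}_n + (Q_n - Q)\Big(\phi_{a,y}(Z;Q) - \tfrac{\Ind{Y=y}\psi_{a,y}}{y\omega + (1-y)(1-\omega)}\Big) + o_\p(n^{-1/2}),
\]
where $\mathrm{Bias}_n = O_\p\big(|\hat\omega-\omega|^2 + \|\hat\eta-\eta\|\,\|\hat\mu_a-\mu_a\| + \|\hat\pi_a-\pi_a\|\,\|\hat\mu_a-\mu_a\| + \|\hat\mu_a-\mu_a\|^2\big)$. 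First I would plug in the rate conditions term by term: condition (4) gives $|\hat\omega-\omega|^2 = o_\p(n^{-1/2})$; conditions (1)–(3) give $\|\hat\eta-\eta\|\,\|\hat\mu_a-\mu_a\| = O_\p(n^{-1/4})\cdot o_\p(n^{-1/4}) = o_\p(n^{-1/2})$ and likewise $\|\hat\pi_a-\pi_a\|\,\|\hat\mu_a-\mu_a\| = o_\p(n^{-1/2})$; and condition (2) alone gives $\|\hat\mu_a-\mu_a\|^2 = o_\p(n^{-1/2})$. Hence $\mathrm{Bias}_n = o_\p(n^{-1/2})$.

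Next I would handle the empirical-process (centered-sum) term. Because the nuisance functions are fit on the independent split $\hat Q$ (Condition 2 of Theorem~\ref{Theorem: Error of estimator}) and $Q_n$ is a fresh sample, conditional on $\hat Q$ the summand $\phi_{a,y}(Z;\hat Q) - \phi_{a,y}(Z;Q)$ is a fixed function, so a standard Chebyshev/Lemma-2-of-\citet{kennedy2021semiparametric}-style argument shows $(Q_n-Q)\big(\phi_{a,y}(Z;\hat Q) - \phi_{a,y}(Z;Q)\big) = O_\p(n^{-1/2}\,\|\hat\phi_{a,y}-\phi_{a,y}\|)$, which is $o_\p(n^{-1/2})$ by Condition 1 of Theorem~\ref{Theorem: Error of estimator} ($\|\phi_{a,y}-\hat\phi_{a,y}\| = o_\p(1)$). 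This lets me replace the estimated-nuisance centered sum with the oracle centered sum $(Q_n-Q)\big(\phi_{a,y}(Z;Q) - \Ind{Y=y}\psi_{a,y}/(y\omega+(1-y)(1-\omega))\big)$, up to $o_\p(n^{-1/2})$. Combining with the previous paragraph,
\[
\sqrt{n}(\hat{\psi}_{a,y} - \psi_{a,y}) = \sqrt{n}\,(Q_n - Q)\,\mathrm{IF}(\psi_{a,y})(Z) + o_\p(1),
\]
where $\mathrm{IF}(\psi_{a,y})(Z) = \phi_{a,y}(Z;Q) - \Ind{Y=y}\psi_{a,y}/(y\omega+(1-y)(1-\omega))$.

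Finally I would invoke the classical CLT for i.i.d.\ sums: the summand has mean zero (it is an influence function, or directly because $\E[\phi_{a,y}(Z;Q)] = \psi_{a,y}$ and $\E[\Ind{Y=y}]/(y\omega+(1-y)(1-\omega)) = 1$) and finite variance — the latter follows from the boundedness guaranteed by the strong-overlap, outcome-variance, and outcome-base-rate conditions (3)–(5) of Theorem~\ref{Theorem: Error of estimator}, which keep $\pi_a(X)$, $\mu_a(X)(1-\mu_a(X))$, and $\omega(1-\omega)$ bounded away from zero so that $\mathrm{IF}(\psi_{a,y})$ is an $L_2(\p)$ function. Slutsky then yields $\sqrt{n}(\hat\psi_{a,y}-\psi_{a,y}) \rightsquigarrow \mathcal{N}\big(0, \var(\mathrm{IF}(\psi_{a,y}))\big)$, with the variance written out as in the statement. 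The main obstacle is the bookkeeping in the empirical-process step — making precise the conditioning-on-$\hat Q$ argument and verifying that $\|\hat\phi_{a,y}-\phi_{a,y}\| = o_\p(1)$ is genuinely implied by the component rate conditions together with the boundedness conditions (one needs the denominators $\hat\pi_a\hat\mu_a(1-\hat\mu_a)$ and $\hat\omega(1-\hat\omega)$ bounded below, which Conditions 3–5 supply) — but this is routine given the machinery already set up.
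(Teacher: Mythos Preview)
Your proposal is correct and follows exactly the approach the paper intends: the paper does not give a separate proof for this corollary, treating it as immediate from the error decomposition in Theorem~\ref{Theorem: Error of estimator}, and your argument simply fills in the standard details (plugging the rate conditions into the $O_\p$ bias bound and then applying the CLT and Slutsky). One small redundancy: your second paragraph re-derives the empirical-process replacement step, but that is already absorbed into the $o_\p(n^{-1/2})$ term in the conclusion of Theorem~\ref{Theorem: Error of estimator}, so you can go straight from the displayed decomposition to the CLT.
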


\subsection{Estimation of $\gamma$}
Our proposed estimator for $\gamma(\rho)$ is
\begin{align}
\begin{split}
    \hat{\gamma}(\rho) &=\exp\Big(  \rho~ (\hat{\psi}_{1,1} - \hat{\psi}_{0,1}) + (1-\rho) (\hat{\psi}_{1,0} - \hat{\psi}_{0,0})\Big) 
\end{split}
\end{align}
where $\hat{\psi}_{a,y}$ is defined in Eq.~\ref{equation: proposed estimator for psi_ay}.

\begin{corollary}\label{corollary: asymptotic normality of gamma estimator}
The estimator $\hat{\gamma}(\rho)$ is $\sqrt{n}$-consistent and asymptotically normal 
under the assumptions in Theorem~\ref{Theorem: Error of estimator} and in Corollary~\ref{corollary: asymptotic normality of log estimator} for all $(a,y) \in \{0,1\}^2$.

The limiting distribution is $\sqrt{n}(\hat{\gamma}(\rho) - \gamma(\rho)) \rightsquigarrow \mathcal{N}(0, \sigma^2)$ where $\sigma^2$ is given in Theorem~\ref{theorem: efficiency bound for variance of influence function for geometric mean}.
\end{corollary}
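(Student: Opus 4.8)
The plan is to reduce the claim to the asymptotic linearity of the four estimators $\hat\psi_{a,y}$ already supplied by Theorem~\ref{Theorem: Error of estimator} and Corollary~\ref{corollary: asymptotic normality of log estimator}, and then push a single linear combination through the central limit theorem and the delta method. First I would observe that, under the rate conditions of Corollary~\ref{corollary: asymptotic normality of log estimator}, every term in the $O_\p(\cdot)$ bracket of Theorem~\ref{Theorem: Error of estimator} is $o_\p(n^{-1/2})$: each of $\norm{\hat\eta-\eta}\norm{\hat\mu_a-\mu_a}$, $\norm{\hat\pi_a-\pi_a}\norm{\hat\mu_a-\mu_a}$, and $\norm{\hat\mu_a-\mu_a}^2$ is a product of two factors that are $O_\p(n^{-1/4})$, at least one of which is $o_\p(n^{-1/4})$, while $|\hat\omega-\omega|^2 = O_\p(n^{-1})$ since $\hat\omega$ is an empirical mean. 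This leaves, for every $(a,y)\in\{0,1\}^2$ and jointly (all four $\hat\psi_{a,y}$ being averages over the \emph{same} fold $Q_n$), the asymptotically linear representation
\[
\hat\psi_{a,y} - \psi_{a,y} = (Q_n - Q)\,\varphi_{a,y}(Z;Q) + o_\p(n^{-1/2}),
\qquad
\varphi_{a,y}(Z;Q) = \phi_{a,y}(Z;Q) - \frac{\Ind{Y=y}\,\psi_{a,y}}{y\omega + (1-y)(1-\omega)},
\]
with $\varphi_{a,y}$ the efficient influence function from Eq.~\ref{equation: influence function of psi a y}.

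Next I would form $\log\hat\gamma(\rho) = \rho(\hat\psi_{1,1}-\hat\psi_{0,1}) + (1-\rho)(\hat\psi_{1,0}-\hat\psi_{0,0})$ and take the matching linear combination of the four expansions, obtaining
\[
\log\hat\gamma(\rho) - \log\gamma(\rho) = (Q_n - Q)\,\IF(\log(\gamma)) + o_\p(n^{-1/2}),
\qquad
\IF(\log(\gamma)) = \rho(\varphi_{1,1}-\varphi_{0,1}) + (1-\rho)(\varphi_{1,0}-\varphi_{0,0}),
\]
which is $\IF(\gamma)/\gamma$ in the notation of Eq.~\ref{equation: influence function gamma}. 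The boundedness conditions of Theorem~\ref{Theorem: Error of estimator} --- strong overlap, $\mu_a(1-\mu_a)>\epsilon$ (which forces $\mu_a$, hence $\logit(\mu_a)$, to be bounded), and $\omega$ bounded away from $0$ and $1$ --- guarantee $\var(\IF(\log(\gamma)))<\infty$. Hence the ordinary iid CLT gives $\sqrt n\,(Q_n-Q)\,\IF(\log(\gamma))\rightsquigarrow\mathcal N\big(0,\var(\IF(\log(\gamma)))\big)$, and Slutsky's lemma yields $\sqrt n\big(\log\hat\gamma(\rho)-\log\gamma(\rho)\big)\rightsquigarrow\mathcal N\big(0,\var(\IF(\log(\gamma)))\big)$.

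Finally I would apply the delta method to $\hat\gamma(\rho)=\exp(\log\hat\gamma(\rho))$: since $t\mapsto e^t$ is differentiable with derivative $e^{\log\gamma(\rho)}=\gamma(\rho)$ at the limit point, $\sqrt n\big(\hat\gamma(\rho)-\gamma(\rho)\big)\rightsquigarrow\mathcal N\big(0,\gamma(\rho)^2\var(\IF(\log(\gamma)))\big)$. It then remains only to identify the variance: by Theorem~\ref{theorem: efficiency bound for variance of influence function for geometric mean}, $\sigma^2=\gamma^2\var(\IF(\log(\gamma)))$ with $\var(\IF(\log(\gamma)))$ equal to the displayed efficiency-bound expression, so the limit is $\mathcal N(0,\sigma^2)$; equivalently, the delta-method influence function $\gamma\cdot\IF(\log(\gamma))$ coincides with $\IF(\gamma)$ of Theorem~\ref{theorem: remainder term for target}, so the estimator is efficient. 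I do not anticipate a serious obstacle beyond this bookkeeping; the one point that genuinely matters is that the four $\hat\psi_{a,y}$ be expanded \emph{jointly} on the common fold, so that the limiting variance of their linear combination retains the cross-covariance terms --- precisely the $\mathrm{cov}(\Psi(X,Y),\Psi^{*}(Y))$-type contributions in Theorem~\ref{theorem: efficiency bound for variance of influence function for geometric mean} --- rather than collapsing to a sum of four marginal variances. If cross-fitting replaces the single split, one averages the fold-wise linear terms and invokes the standard argument that the cross-fit remainder is still $o_\p(n^{-1/2})$.
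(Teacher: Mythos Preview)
Your proposal is correct and follows the route the paper implicitly sets up: the paper does not spell out a proof of this corollary, but its architecture (Theorem~\ref{Theorem: Error of estimator} giving asymptotic linearity of each $\hat\psi_{a,y}$, the identity $\log\gamma(\rho)=\rho(\psi_{1,1}-\psi_{0,1})+(1-\rho)(\psi_{1,0}-\psi_{0,0})$, and the influence-function decomposition in Theorem~\ref{theorem: remainder term for target}) points to exactly the argument you give --- combine the four linear expansions on the common sample, apply the CLT to the resulting $(Q_n-Q)\IF(\log\gamma)$, and finish with the delta method. Your emphasis on handling the four $\hat\psi_{a,y}$ jointly so that the covariance terms survive is the one nontrivial point, and you have it right.
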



\section{Inference} \label{section:inference}
This section discusses how to do inference when estimating $\gamma(\rho)$ over a user-specified range of values $[\pmin,~\pmax]$ for the unknown outcome rate $\rho$.
We note that $\gamma(\rho)$ is monotonic in $\rho$, so our bound on $\gamma(\rho)$ has as endpoints $\gamma(\pmin)$ and $\gamma(\pmax)$. 
First we discuss how to obtain a confidence interval on the endpoints, using $\gamma(\pmax)$ as an example. Then we show how these imply a confidence interval on the bound for $\gamma(\rho)$. Guided by our theoretical results in the previous sections, our proposed approach is influence-function based. Alternatively, one could use the corrected confidence interval approach in \citet{imbens2004confidence} to give a confidence interval for $\gamma(\rho)$.

Based on the efficient influence function of $\gamma(\rho)$ (Eq.~\ref{equation: influence function gamma}), we create a pseudo-outcome $\zeta(Z_i; \hat \gamma(\pmax))$ for each observation $Z_i$ as 
 \begin{align}
    \zeta(Z_i; \hat \gamma(\pmax)) &= \hat \gamma(\pmax)  \Big(\pmax \big(\hat \varphi_{1,1}(Z_i) - \hat \varphi_{0,1}(Z_i) \big) + (1-\pmax) \big( \hat \varphi_{1,0}(Z_i) -\hat \varphi_{0,0}(Z_i) \big) \Big)
 \end{align}
 where $\hat \gamma(\pmax)$ and $\hat \psi_{a,y}$ for $(a,y) \in \{0,1\}^2$ are estimates using the doubly robust approach in Sec.~\ref{section: estimation} and where, for $a \in \{0,1\}$,
 \begin{align*} 
    \hat \varphi_{a,0}(Z) &= \frac{\mathrm{logit}(\hat \mu_a(X))}{1-\hat \omega}(1-Y)  +\frac{1-\hat \eta(X)}{1-\hat \omega} \frac{\Ind{A= a}(Y - \hat \mu_a(X))}{\hat \pi_a(X) \hat \mu_a(X)(1-\hat \mu_a(X))}  - \frac{\hat \psi_{a,y} (1-Y)}{1-\hat \omega} \\
    \hat \varphi_{a,1}(Z) &= \frac{\mathrm{logit}(\hat \mu_a(X))}{\hat \omega}Y  +\frac{\hat \eta(X)}{\hat \omega} \frac{\Ind{A= a}(Y - \hat \mu_a(X))}{\hat \pi_a(X) \hat \mu_a(X)(1-\hat \mu_a(X))}  - \frac{\hat \psi_{a,y} Y}{\hat \omega}. 
\end{align*}
\\

We can then make use of the asymptotic normality results from the previous section. If the conditions in Corollary~\ref{corollary: asymptotic normality of gamma estimator} are met, then Corollary~\ref{corollary: asymptotic normality of gamma estimator} and Slutsky's theorem give that a $100(1-\alpha)\%$ asymptotic confidence interval for $\gamma(\pmax)$ is 
\begin{align*}
   \hat \gamma(\pmax) \pm z_{1-\alpha/2}\sqrt{\frac{\hat {\var}(\zeta(Z_i; \hat \gamma(\pmax)))}{n}}
\end{align*}
where $z_\beta$ is the standard normal quantile of $\beta$ and the empirical variance is over $Z$, holding as fixed the estimated nuisance functions. We can repeat this process on the same sample to obtain the confidence interval for $\gamma(\pmin)$.

To obtain the confidence interval for our bound on $\rho$, we define 
\begin{align*}
    \hat{\gamma}_{min}(\pmin,\pmax) &:= \min(\hat \gamma(\pmin),\hat \gamma(\pmax)) \\
    \hat{\gamma}_{max}(\pmin,\pmax) &:= \max(\hat \gamma(\pmin),\hat \gamma(\pmax)).
\end{align*}

\begin{proposition}
Under the conditions in Corollary~\ref{corollary: asymptotic normality of gamma estimator}, the interval $[l_{\alpha}, u_{\alpha}]$ where
\begin{align*} 
l_{\alpha} &:= \hat{\gamma}_{min}(\pmin,\pmax) - z_{1-\alpha/2}\sqrt{\frac{\hat \var(\zeta(Z_i; \hat{\gamma}_{min}(\pmin,\pmax)))}{n}}\\
u_{\alpha} &:=
\hat{\gamma}_{max}(\pmin,\pmax) + z_{1-\alpha/2}\sqrt{\frac{\hat \var(\zeta(Z_i; \hat{\gamma}_{max}(\pmin,\pmax))}{n}}
\end{align*}
\end{proposition}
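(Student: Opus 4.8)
The plan is to prove that $[l_{\alpha},u_{\alpha}]$ is an asymptotically valid (if conservative) confidence interval for the identification region of $\gamma(\rho)$. Write $\gamma_{min}:=\min(\gamma(\pmin),\gamma(\pmax))$ and $\gamma_{max}:=\max(\gamma(\pmin),\gamma(\pmax))$; I would establish $\liminf_{n}\p\big([\gamma_{min},\gamma_{max}]\subseteq[l_{\alpha},u_{\alpha}]\big)\ge 1-\alpha$, which, since the true $\gamma(\rho)\in[\gamma_{min},\gamma_{max}]$, also yields $\liminf_{n}\p\big(\gamma(\rho)\in[l_{\alpha},u_{\alpha}]\big)\ge 1-\alpha$. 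The engine is a Bonferroni split across the two endpoints. First I would invoke the structural fact from Section~\ref{section:inference} that $\gamma(\rho)=\exp\!\big(\rho(\psi_{1,1}-\psi_{0,1})+(1-\rho)(\psi_{1,0}-\psi_{0,0})\big)$ is monotone in $\rho$, so the identification region is exactly the closed interval with endpoints $\gamma(\pmin)$ and $\gamma(\pmax)$; when $\psi_{1,1}-\psi_{0,1}\ne\psi_{1,0}-\psi_{0,0}$ this interval is nondegenerate and $\gamma$ is strictly monotone in $\rho$, and the degenerate (point-identified) case is treated separately at the end.

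Next I would apply Corollary~\ref{corollary: asymptotic normality of gamma estimator} at $\rho=\pmin$ and $\rho=\pmax$ to get $\sqrt{n}(\hat\gamma(\rho)-\gamma(\rho))\rightsquigarrow\mathcal{N}(0,\sigma^2_{\rho})$ at each endpoint, and pair it with consistency of the plug-in variance estimator, $\hat{\var}(\zeta(Z_i;\hat\gamma(\rho)))\to_{\p}\sigma^2_{\rho}$. The latter is the routine-but-careful ingredient: $\zeta(Z_i;\hat\gamma(\rho))$ is the estimated efficient influence function $\IF(\gamma(\rho))$ of Eq.~\ref{equation: influence function gamma} evaluated at the estimated nuisances, and under the boundedness conditions of Theorem~\ref{Theorem: Error of estimator} (propensities, $\mu_a(1-\mu_a)$, and $\omega$ bounded away from $0$ and $1$) together with $L_2(\p)$-consistency of the nuisance estimates, a law-of-large-numbers-plus-continuous-mapping argument on the held-out fold gives this convergence, with sample splitting removing the need for empirical-process (Donsker) control. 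Slutsky's theorem then turns each endpoint interval into an asymptotic level-$(1-\alpha)$ two-sided interval; more to the point, the one-sided tail probabilities satisfy $\p\big(\gamma(\rho)<\hat\gamma(\rho)-z_{1-\alpha/2}\sqrt{\hat{\var}(\zeta(Z_i;\hat\gamma(\rho)))/n}\big)\to\alpha/2$ and symmetrically on the upper side.

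I would then show the $\min/\max$ labels are eventually correct. Since $\hat\gamma(\pmin)\to_{\p}\gamma(\pmin)$ and $\hat\gamma(\pmax)\to_{\p}\gamma(\pmax)$ with $\gamma(\pmin)\ne\gamma(\pmax)$, the event that $\hat{\gamma}_{min}(\pmin,\pmax)$ and $\hat{\gamma}_{max}(\pmin,\pmax)$ are evaluated at the same endpoints $\rho^{\dagger},\rho^{\ddagger}$ that attain $\gamma_{min},\gamma_{max}$ has probability tending to $1$. On that event $l_{\alpha}=\hat\gamma(\rho^{\dagger})-z_{1-\alpha/2}\sqrt{\hat{\var}(\cdot)/n}$ and $u_{\alpha}=\hat\gamma(\rho^{\ddagger})+z_{1-\alpha/2}\sqrt{\hat{\var}(\cdot)/n}$, so by the previous step $\p(\gamma_{min}<l_{\alpha})\to\alpha/2$ and $\p(\gamma_{max}>u_{\alpha})\to\alpha/2$. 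A union bound then gives
\begin{align*}
\limsup_{n}\p\big([\gamma_{min},\gamma_{max}]\not\subseteq[l_{\alpha},u_{\alpha}]\big) &\le \limsup_{n}\p(\gamma_{min}<l_{\alpha})+\limsup_{n}\p(\gamma_{max}>u_{\alpha}) = \alpha,
\end{align*}
which is the claim, and hence also coverage of the true $\gamma(\rho)$.

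For the degenerate case $\psi_{1,1}-\psi_{0,1}=\psi_{1,0}-\psi_{0,0}$ the parameter is point-identified, $\gamma_{min}=\gamma_{max}=\gamma(\rho)$, and the same two one-sided bounds give $\p(\gamma(\rho)<l_{\alpha})+\p(\gamma(\rho)>u_{\alpha})\to\alpha$ regardless of how ties in $\min/\max$ are broken, so coverage remains at least $1-\alpha$ (here strictly conservative, since $z_{1-\alpha}$ would suffice). The main obstacle, beyond the bookkeeping for variance-estimator consistency, is exactly this interaction between the $\min/\max$ selection and the width of the identified set --- the classical Imbens--Manski phenomenon: the stated interval uses $z_{1-\alpha/2}$ at both ends and is therefore asymptotically conservative (coverage $\ge 1-\alpha$, not $=1-\alpha$), with the sharper interpolated critical value of \citet{imbens2004confidence} available as the refinement already noted in the text.
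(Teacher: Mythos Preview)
Your proposal is correct and follows essentially the same approach as the paper: asymptotic normality of the endpoint estimators via Corollary~\ref{corollary: asymptotic normality of gamma estimator}, Slutsky's theorem for the variance estimate, and a union (Bonferroni) bound over the two endpoints. Your argument is in fact more careful than the paper's very terse proof, since you explicitly handle the consistency of the $\min/\max$ selection, treat the degenerate (point-identified) case, and flag the conservativeness relative to the Imbens--Manski refinement.
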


gives a $100(1-\alpha)\%$ asymptotic confidence interval for the bound on $\gamma(\rho)$ for $p \in [\pmin, \pmax]$.

\begin{proof}
 $\hat{\gamma}_{min}$ and $\hat{\gamma}_{max}$ are asymptotically normal under Corollary~\ref{corollary: asymptotic normality of gamma estimator}. Applying Slutsky's Theorem, their variances are $\var(\zeta(Z_i; \hat{\gamma}_{min}(\pmin,\pmax)))$ and $\var(\zeta(Z_i; \hat{\gamma}_{max}(\pmin,\pmax))$, respectively. Applying Slutsky's Theorem once more, then $P(\rho >u_{\alpha}) = \frac{\alpha}{2}$ and $P(\rho < l_{\alpha}) = \frac{\alpha}{2}$. By the union bound, $P(\rho \notin [l_{\alpha},u_{\alpha}]) \leq \alpha$.
 
\end{proof}

\section{Conclusion}
The geometric mean odds ratio has the desirable property of collapsibility, unlike the more commonly used arithmetic mean. Under outcome-dependent sampling, the geometric odds ratio is not point identified, but we can estimate the geometric odds ratio as a function of the unknown outcome rate. We detail the efficiency theory for the geometric odds ratio, describe a doubly robust estimation procedure that is $\sqrt{n}$-consistent and asymptotically normal under mild conditions, and propose an inference procedure to construct confidence intervals for the geometric odds ratio over a range of possible values for the unknown outcome rate $\rho$.

\section*{Acknowledgments}
 Coston gratefully acknowledges financial support support from  the National Science Foundation  Graduate
Research Fellowship Program under Grant No. DGE1745016. Any opinions,
findings, and conclusions or recommendations expressed in this material are solely those of the authors.
 

\clearpage
\section{Appendix}
The appendix provides the derivations of identification results, proofs omitted from the main paper, and additional theoretical results.

\subsection{Identifications}
\subsubsection{Identifications under random sampling}

\begin{align*}
    OR &= \frac{\frac{\E[Y^1]}{\E[1-Y^1]}}{\frac{\E[Y^0]}{\E[1-Y^0]}} \\
    &= \frac{\frac{\E[ \E[Y^1 \mid X]]}{\E[\E[1-Y^1 \mid X]]}}{\frac{\E[\E[Y^0 \mid X]]}{\E[\E[1-Y^0 \mid X]]}} \\
    &= \frac{\frac{\E[ \E[Y^1 \mid X, A = 1]]}{\E[\E[1-Y^1 \mid X, A = 1]]}}{\frac{\E[\E[Y^0 \mid X, A = 0]]}{\E[\E[1-Y^0 \mid X, A = 0]]}} \\
    &= \frac{\frac{\E[ \E[Y \mid X, A = 1]]}{\E[\E[1-Y \mid X, A = 1]]}}{\frac{\E[\E[Y \mid X, A = 0]]}{\E[\E[1-Y \mid X, A = 0]]}} \\
\end{align*}
The first line holds by definition, the second by iterated expectation, the third from ignorability, and the fourth from consistency.

\begin{align*}
      \gamma &= \prod \Bigg\{ \frac{\frac{P(Y^1 =1 \mid X = x)}{P(Y^1=0\mid X = x)}}{\frac{P(Y^0 =1\mid X = x)}{P(Y^0=0\mid X = x)}}\Bigg\}^{dP(x)} \\
      &= \prod \Bigg\{ \frac{\frac{P(Y^1 =1 \mid A =1, X = x)}{P(Y^1=0 \mid A =1, X = x)}}{\frac{P(Y^0 =1 \mid A =0, X = x)}{P(Y^0=0 \mid A =0, X = x)}}\Bigg\}^{dP(x)} \\
      &= \prod \Bigg\{ \frac{\frac{P(Y =1 \mid A =1, X = x)}{P(Y=0 \mid A =1, X = x)}}{\frac{P(Y =1 \mid A =0, X = x)}{P(Y = 0 \mid A =0, X = x)}}\Bigg\}^{dP(x)} \\
\end{align*}
The first line holds by definition, the second from ignorability, and the third from consistency.

\subsubsection{Partial identifications under outcome-dependent sampling}

We first provide the derivation for the partial identification of $\nu_a(x) := P(Y =1\mid X = x, A = a)$

\begin{align*}
&= \frac{P(Y =1, X = x, A = a)}{P(X = x, A = a)} \\
 &= \frac{P(A = a \mid Y =1, X = x) P(X=x\mid Y =1) P(Y=1)}{P(X = x, A = a \mid Y = 1)P(Y=1) + P(X = x, A = a \mid Y = 0)P(Y=0)} \\&= 
    \frac{P(A = a \mid X = x, Y = 1) P(X = x \mid Y = 1)\rho}{P(A = a, X = x \mid Y =1) \rho + P(A = a, X = x \mid Y =0) (1-\rho) }
\end{align*}

Next we observe that under  consistency, positivity, and ignorability, we can express the population odds ratio in terms of $\nu_1(x)$ and $\nu_0(x)$ as 
\begin{align*}
    OR = \frac{\frac{\E[\nu_1(X)]}{\E[1-\nu_1(X)]}}{\frac{\E[\nu_0(X)]}{\E[1-\nu_0(X)]}}
\end{align*}
Then the law of total probability partially identifies the population odds ratio as 
\begin{align*}  
    OR(\rho) &= \frac{\frac{\rho \E[ \nu_1(X) \mid Y = 1] + (1-\rho) \E[ \nu_1(X) \mid Y = 0]}{\rho \E[ 1- \nu_1(X) \mid Y = 1] + (1-\rho) \E[ 1-\nu_1(X) \mid Y = 0]}}{\frac{\rho \E[ \nu_0(X) \mid Y = 1] + (1-\rho) \E[ \nu_0(X) \mid Y = 0]}{\rho \E[ 1- \nu_0(X) \mid Y = 1] + (1-\rho) \E[ 1-\nu_0(X) \mid Y = 0]}}
\end{align*}

\subsection{Proofs}
Some proofs will make use of the notation:
\begin{align*}
    \rho_y &:= P(Y= y)\ \  \mathrm{for} \ \ y \in \{0,1\} \\
    \omega_y &:= Q(Y= y)\ \  \mathrm{for} \ \ y \in \{0,1\} \\
    \eta_y(x) &:= Q(Y= y \mid X =x)\ \  \mathrm{for} \ \ y \in \{0,1\} \\
\end{align*}
The subscript will be omitted when context clearly indicates $y=1$ (following the notation used in the main paper).
Additionally we may overload some notation used in the main paper.

We first give a generic result for the von Mises expansion of smooth functions of the outcome regression functions that will be useful for deriving our main theoretical results.

\subsubsection{Second-Order Result for Functions of Regression Functions}
\begin{lemma}\label{lemma: second order result for functions of regression functions}
For $a \in \{0,1\}$,  $y \in \{0,1\}$, and any twice differentiable function $f$ of the regression function $\mu_a(x)$, define $\psi_{a,y} (Q) :=  \E[f(\mu_a(x)) \mid Y = y] = \int \frac{f(\mu_a(x))\eta_y(x)}{\omega_y}dQ $.

Then we can expand $\psi_{a,y}(Q)$ as

\begin{align*}
\psi_{a,y}(Q) &= \psi_{a,y}(\bar Q) + \int \varphi_{a,y}(\bar Q) d(Q-\bar Q) + R_2(\bar{Q}, Q)\\ 
 R_2(\bar{Q}, Q) &= \frac{\bar \omega_y - \omega_y}{\bar \omega_y} \big(\psi_{a,y}(Q) - \psi_{a,y}(\bar Q)\big) + \frac{1}{\bar \omega_y}\int  \Big( f{'}(\bar \mu_a(x))\big( \eta_y(x) - \bar \eta_y(x) \big) (\mu_a(x) - \bar \mu_a(x)) \\
    &+ \bar \eta_y(x) \big( f{'}(\bar \mu_a(x)) (\mu_a(x) - \bar \mu_a(x))\frac{\bar \pi_a(x) - \pi_a(x)}{\bar \pi_a (x)} + \eta(x)f{''}((\mu^{*}_a(x))\frac{(\mu_a(x) - \bar \mu_a(x))^2}{2}\Big)dQ
 \end{align*}

where $\mu_a^{*}(x)$ is between $\bar{\mu}_a(x)$ and $\mu_a(x)$ and where $\varphi_{a,y}$ is 
\begin{align} \label{equation: influence function for generic function of regression functions}
    \begin{split}
    \varphi_{a,y}(z;Q) &= \frac{f(\mu_a(X))}{\omega_y}\big(\ind\{Y = y\} -\eta_y(X)\big) +\frac{f^{'}(\mu_a(X))\eta_y(X)}{\omega_y} \frac{\ind\{A =a\}(Y - \mu_a(X))}{\pi_a(X)} \\
    &+ \frac{\eta_y(X) f(\mu_a(X))}{\omega_y}- \frac{\psi_{a,y} \ind\{Y = y\}}{\omega_y}
    \end{split}
\end{align}

Since the remainder term $R_2(\bar{Q}, Q)$ is a product of the nuisance function errors, we can apply Lemma 2 of \cite{kennedy2021semiparametric} to conclude that $\psi_{a,y}(Q)$ is pathwise differentiable with efficient influence function $\varphi_{a,y}(z; Q)$.
\end{lemma}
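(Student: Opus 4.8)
The plan is a direct \emph{guess-and-verify} computation: take the candidate influence function $\varphi_{a,y}$ displayed in the statement, evaluate its $Q$-expectation when its nuisance arguments are set to $\bar Q$, and show that the resulting discrepancy equals a remainder $R_2(\bar Q,Q)$ each of whose terms is a product of two (or more) nuisance errors and which vanishes when $\bar Q = Q$. Lemma~2 of \cite{kennedy2021semiparametric}, quoted in the statement, then converts this second-order-remainder structure into the conclusion that $\psi_{a,y}$ is pathwise differentiable with efficient influence function $\varphi_{a,y}(z;Q)$. The algebra specializes, via $f=\logit$ (so $f'(u)=\tfrac1{u(1-u)}$ and $\tfrac12 f''(u)=\frac{u-1/2}{u^2(1-u)^2}$), to Lemma~\ref{lemma: remainder term for psi_01} and its $(a,y)$-analogues.

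Concretely, I would start from the identification $\psi_{a,y}(Q) = \frac{1}{\omega_y}\int f(\mu_a(x))\eta_y(x)\,dQ$ and compute $\int\varphi_{a,y}(z;\bar Q)\,dQ$ summand by summand, using only the two iterated-expectation identities $\E_Q[\Ind{Y=y}\mid X]=\eta_y(X)$ and $\E_Q[\Ind{A=a}(Y-\bar\mu_a(X))\mid X]=\pi_a(X)(\mu_a(X)-\bar\mu_a(X))$: the first summand of $\varphi_{a,y}$ contributes $\frac1{\bar\omega_y}\int f(\bar\mu_a)(\eta_y-\bar\eta_y)\,dQ$, the second $\frac1{\bar\omega_y}\int f'(\bar\mu_a)\bar\eta_y\frac{\pi_a}{\bar\pi_a}(\mu_a-\bar\mu_a)\,dQ$, the third (being $X$-measurable) $\frac1{\bar\omega_y}\int\bar\eta_y f(\bar\mu_a)\,dQ$, and the fourth $-\psi_{a,y}(\bar Q)\,\omega_y/\bar\omega_y$. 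Running the same computation with $\bar Q$ replaced by $Q$ gives $\E_Q[\varphi_{a,y}(Z;Q)]=0$, so $\int\varphi_{a,y}(\bar Q)\,d(Q-\bar Q)=\int\varphi_{a,y}(\bar Q)\,dQ$ and the von Mises expansion takes the form stated.

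Next I would form $R_2 := \psi_{a,y}(Q)-\psi_{a,y}(\bar Q)-\int\varphi_{a,y}(z;\bar Q)\,dQ$ and simplify with three moves. First, split $\frac1{\omega_y}=\frac1{\bar\omega_y}+\frac{\bar\omega_y-\omega_y}{\omega_y\bar\omega_y}$, which peels off $\frac{\bar\omega_y-\omega_y}{\bar\omega_y}\big(\psi_{a,y}(Q)-\psi_{a,y}(\bar Q)\big)$ (a product of the $\omega$-error with the first-order difference $\psi_{a,y}(Q)-\psi_{a,y}(\bar Q)$) and leaves everything else weighted by $1/\bar\omega_y$. Second, cancel the zeroth-order pieces $\pm\frac1{\bar\omega_y}\int\bar\eta_y f(\bar\mu_a)\,dQ$, so that the $f$-dependent part reduces to $\frac1{\bar\omega_y}\big[\int(f(\mu_a)-f(\bar\mu_a))\eta_y\,dQ-\int f'(\bar\mu_a)\bar\eta_y\frac{\pi_a}{\bar\pi_a}(\mu_a-\bar\mu_a)\,dQ\big]$. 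Third, Taylor-expand $f(\mu_a)-f(\bar\mu_a)=f'(\bar\mu_a)(\mu_a-\bar\mu_a)+\tfrac12 f''(\mu_a^{*})(\mu_a-\bar\mu_a)^2$ with $\mu_a^{*}$ between $\bar\mu_a$ and $\mu_a$, and regroup $\eta_y-\bar\eta_y\frac{\pi_a}{\bar\pi_a}=(\eta_y-\bar\eta_y)+\bar\eta_y\frac{\bar\pi_a-\pi_a}{\bar\pi_a}$; this exposes exactly the three integrands $f'(\bar\mu_a)(\eta_y-\bar\eta_y)(\mu_a-\bar\mu_a)$, $\bar\eta_y f'(\bar\mu_a)(\mu_a-\bar\mu_a)\frac{\bar\pi_a-\pi_a}{\bar\pi_a}$, and $\eta_y f''(\mu_a^{*})\tfrac12(\mu_a-\bar\mu_a)^2$ of the claimed $R_2$, each manifestly a product of two errors.

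The only real obstacle is bookkeeping: one must carefully track the $1/\omega_y$ versus $1/\bar\omega_y$ weights, identify which terms are genuinely zeroth-order and therefore cancel, and match the Taylor and propensity-score regroupings to the stated form of $R_2$. There is no analytic subtlety beyond requiring $f$ to be twice differentiable on the (common) range of $\mu_a$ and $\bar\mu_a$ and requiring $\bar\omega_y$ and $\bar\pi_a$ to be bounded away from zero so that the displayed ratios are well defined — which for $f=\logit$ is precisely why the estimation section imposes that $\mu_a(1-\mu_a)$ be bounded away from $0$.
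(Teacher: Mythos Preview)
Your proposal is correct and follows essentially the same route as the paper's own proof: both compute $R_2=\psi_{a,y}(Q)-\psi_{a,y}(\bar Q)-\int\varphi_{a,y}(\bar Q)\,dQ$ directly (using $\int\varphi_{a,y}(\bar Q)\,d\bar Q=0$), isolate the $\omega$-error piece by an add-and-subtract of $\psi_{a,y}(Q)\,\omega_y/\bar\omega_y$ (your fraction split $\tfrac1{\omega_y}=\tfrac1{\bar\omega_y}+\tfrac{\bar\omega_y-\omega_y}{\omega_y\bar\omega_y}$ is the same maneuver), apply iterated expectation to the $\Ind{A=a}(Y-\bar\mu_a)$ and $\Ind{Y=y}$ terms, Taylor-expand $f(\mu_a)$ about $f(\bar\mu_a)$, and regroup $\eta_y-\bar\eta_y\pi_a/\bar\pi_a$ to expose the product-of-errors structure. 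The paper carries this out for $(a,y)=(1,1)$ and asserts the other cases are analogous, while you write the argument generically; otherwise the steps and their ordering coincide.
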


\subsubsection*{Proof of Lemma~\ref{lemma: second order result for functions of regression functions}}
\begin{proof}
We provide the proof for $a =1$ and $y=1$. Similar steps yield the result for other values of $(a,y) \in \{0,1\}^2$.

The posited influence function of $\psi_{1,1}$ 
\begin{align*}
    \varphi_{1,1}(Z) &= \frac{\eta(X)f(\muone(X))}{\omega_1} - \frac{\psi_{1,1} Y}{\omega_1} +\frac{f^{'}(\muone(X))\eta(X)}{\omega_1} \frac{A(Y - \muone(X))}{\pi_1(X)} + \frac{f(\mu_1(X))}{\omega_1}(Y- \eta(X))
\end{align*}
gives $\psi_{1,1}(Q) - \psi_{1,1}(\bar Q) - \int \varphi_{1,1}(\bar Q) d(Q-\bar Q) $
\begin{align*}
&= \psi_{1,1}(Q) - \psi_{1,1}(\bar Q)\\& \quad - \int \bigg( \frac{\bar \eta(X)f(\bar \muone(x))}{\bar \omega_1} - \frac{\psi_{1,1}(\bar Q) Y}{\bar \omega_1} +\frac{f^{'}(\bar \muone(x))\bar \eta(x)}{\bar \omega_1} \frac{A(Y - \bar \muone(x))}{\bar \pi(x)} + \frac{f(\bar \mu_1(x))}{\bar \omega_1}(Y- \bar \eta(x))\bigg)dQ\\
&= \frac{\bar \omega_1 - \omega_1}{\bar \omega_1} \big(\psi_{1,1}(Q) - \psi_{1,1}(\bar Q)\big) \\
& \quad - \frac{1}{\bar \omega_1}\int \Big( \bar \eta(x)f(\bar \muone(x))  - \eta(x)f(\muone(x)) + f^{'}(\bar \muone(x))\bar \eta(x) \frac{A(Y - \bar \muone(x))}{\bar \pi(x)} + f(\bar \mu_1(x))(Y- \bar \eta(x))\Big)dQ\\
&= \frac{\bar \omega_1 - \omega_1}{\bar \omega_1} \big(\psi_{1,1}(Q) - \psi_{1,1}(\bar Q)\big) \\
& \quad - \frac{1}{\bar \omega_1}\int \Big( \bar \eta(x)f(\bar \muone(x))  - \eta(x)f(\muone(x)) + f^{'}(\bar \muone(x))\bar \eta(x) \frac{\pi(x)(\muone(x) - \bar \muone(x))}{\bar \pi(x)} + f(\bar \mu_1(x))(\eta(x)- \bar \eta(x))\Big)dQ\\
&= \frac{\bar \omega_1 - \omega_1}{\bar \omega_1} \big(\psi_{1,1}(Q) - \psi_{1,1}(\bar Q)\big) \\
& \quad + \frac{1}{\bar \omega_1}\int \Big(  \eta(x) \big( f{'}(\bar \mu_1(x))(\muone(x) - \bar \mu_1(x)) + \eta(x)f{''}(\mu^{*}_1(x))\frac{(\muone(x) - \bar \mu_1(x))^2}{2} \\
& \quad - f^{'}(\bar \muone(x))\bar \eta(x) \frac{\pi(x)(\muone(x) - \bar \muone(x))}{\bar \pi(x)} \Big)dQ\\
&= \frac{\bar \omega_1 - \omega_1}{\bar \omega_1} \big(\psi_{1,1}(Q) - \psi_{1,1}(\bar Q)\big) \\
& \quad + \frac{1}{\bar \omega_1}\int \Big(   f{'}(\bar \mu_1(x)) \big( \eta(x) - \bar \eta(x) \big)(\muone(x) - \bar \mu_1(x))  \\&+ \bar \eta(x) \big( f{'}(\bar \mu_1(x)) (\muone(x) - \bar \mu_1(x))\frac{\bar \pi(x) - \pi(x)}{\bar \pi (x)} + \eta(x)f{''}(\mu^{*}_1(x))\frac{(\muone(x) - \bar \mu_1(x))^2}{2}  \Big)dQ\\
 \end{align*}
 
where the first equality makes use of the fact that $\int \varphi_{1,1}(\bar Q) d\bar Q = 0$, the second equality subtracts and adds term $\psi_{1,1}(Q)\omega_1/\bar \omega_1$ , and the third equality applies iterated expectation. 
The fourth equality applies a Taylor expansion for $f(\mu_1(x))$ around $f(\bar \mu_1(x))$ with the mean-value form remainder where
 $\mu^{*}_1(x)$ is between $\bar{\mu}_1(x)$ and $\mu_1(x)$. The final equality rearranges the first and third terms of the integrand by adding and subtracting $\bar \eta(x) \big( f{'}(\bar \mu_1(x)) (\muone(x) - \bar \mu_1(x))$. 
 
 
\end{proof}

As in the main paper, for our error analysis below we assume that our nuisance functions are estimated on a separate sample, which we denote by $\hat Q$, that is independent and of the same size as the sample denoted by $\qn$. With iid data, we can simply random split the data into two samples. To regain sample efficiency, we can use cross-fitting, a procedure which swaps the samples and averages the results \citep{gyorfi2006distribution, van2003unified, robins2008higher, zheng2010asymptotic, chernozhukov2018generic}. Sample splitting (or cross-fitting) enables us to avoid overfitting without having to rely on empirical process conditions.  

\begin{theorem} \label{Theorem: Error of generic function of regression functions}
Let $f$ be any twice differentiable function of the regression function $\mu_a$ with bounded second derivative. Define the estimator $\hat{\psi}_
{a,y}$ for the target $\psi_
{a,y} := \E[f(\mu_a(X)) \mid Y = y]$  as 
\begin{align} \label{equation: estimator for generic function of regression functions}
    \begin{split}
        \hat{\psi}_
{a,y} &:= Q_n (\phi_{a,y}(Z; \hat{\mu}_a, \hat{\eta}_y, \hat{\pi}_a)) \quad \mathrm{where} \\
&\phi_{a, y}(Z) = f(\mu_a(X))\frac{\eta_y(X)}{\omega_y} +f^{'}(\mu_a(X))\frac{\eta_y(X)}{\omega_y} \frac{\ind\{A = a\}(Y - \mu_a(X))}{\pi_a(X)} + \frac{f(\mu_a(X))}{\omega_y}(\ind\{Y = y\}-\eta_y(X))
    \end{split}
\end{align} Under the following conditions,
\begin{enumerate}
    \item $\norm{\phih_{a,y} - \phi_{a,y}} = o_{\p}(1)$ 
    \item \emph{Sample-splitting:} $\hat{\pi}_a$, $\hat{\eta}_y$, and $\hat{\mu}_a$ are estimated on samples from $\hat Q$.
    \item $Q(\pi_a(X) > \epsilon) = 1$ and $Q(\pih_a(X) > \epsilon) = 1$ for some $\epsilon > 0$ 
    \item $\omega_y > \epsilon > 0$ and $\hat \omega_y > \epsilon > 0$ for $y \in \{0,1\}$ for some $\epsilon > 0$ 
    \item The target $\psi_{a,y}$ is bounded and well-defined, 
\end{enumerate}
then the estimator $\hat{\psi}_{a,y} :=   Q_n (\phi_{a,y}(Z; \hat{\mu}_a, \hat{\eta}_y, \hat{\pi}_a))$ satisfies 
\begin{align*}
 \hat{\psi}_{a,y} - \psi_{a,y} &=  O_{\p}\Big(\left|\hat \omega_y - \omega_y \right|^2  + \norm{\hat \eta_y - \eta_y}\norm{\hat \mu_a - \mu_a} + \norm{\hat{\pi}_a - \pi_a}\norm{\hat \mu_a -\mu_a} + \norm{\hat{\mu}_a -\mu_a}^2 \Big)  \\
&+ (Q_n - Q) \bigg(\phi_{a,y}(Z; Q) - \frac{\ind\{Y=y\}\psi_{a,y}(Q)}{\omega_y}\bigg) + o_{\p}\Big(\frac{1}{\sqrt{n}}\Big)
\end{align*}

\end{theorem}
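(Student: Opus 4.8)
The plan is to combine the von Mises expansion of Lemma~\ref{lemma: second order result for functions of regression functions} (applied at $\bar Q = \hat Q$) with a standard sample-splitting argument for the empirical-process term, treating the estimated outcome rate $\hat\omega_y := Q_n(\ind\{Y=y\})$ as one more component of $\hat Q$. First I would record the algebraic identity $\phi_{a,y}(Z;Q) = \varphi_{a,y}(Z;Q) + \ind\{Y=y\}\psi_{a,y}(Q)/\omega_y$, where $\varphi_{a,y}$ is the mean-zero influence function from Eq.~\ref{equation: influence function for generic function of regression functions}; averaging this over the evaluation fold and using $\hat\omega_y = Q_n(\ind\{Y=y\})$ turns the proposed estimator into the one-step form $\hat\psi_{a,y} = Q_n(\varphi_{a,y}(Z;\hat Q)) + \psi_{a,y}(\hat Q)$. (If instead $\hat\omega_y$ is computed on the full sample, the two forms differ by an $o_{\p}(n^{-1/2})$ term to be absorbed at the end.)

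Next I would insert the expansion from Lemma~\ref{lemma: second order result for functions of regression functions}. Since $\varphi_{a,y}(\cdot;\hat Q)$ is mean zero under $\hat Q$, we have $\int \varphi_{a,y}(\hat Q)\,d(Q-\hat Q) = Q(\varphi_{a,y}(\hat Q))$, so the lemma reads $\psi_{a,y}(\hat Q) = \psi_{a,y}(Q) - Q(\varphi_{a,y}(\hat Q)) - R_2(\hat Q, Q)$. Substituting into the one-step form collapses everything to $\hat\psi_{a,y} - \psi_{a,y}(Q) = (Q_n - Q)(\varphi_{a,y}(Z;\hat Q)) - R_2(\hat Q, Q)$. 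I would then split $(Q_n-Q)(\varphi_{a,y}(\hat Q)) = (Q_n-Q)(\varphi_{a,y}(\hat Q) - \varphi_{a,y}(Q)) + (Q_n-Q)(\varphi_{a,y}(Q))$; the second summand is precisely the stated leading term, and for the first, sample splitting (condition 2) makes $\varphi_{a,y}(\cdot;\hat Q)$ fixed given the training fold, so its conditional variance is $n^{-1}\norm{\varphi_{a,y}(\hat Q) - \varphi_{a,y}(Q)}_{L_2(Q)}^2$, which by condition 1, the consistency $\psi_{a,y}(\hat Q) \to \psi_{a,y}(Q)$ (condition 5), and the mutual absolute continuity of $P$ and $Q$ from Eq.~\ref{equation: bernoulli sampling} is $o_{\p}(n^{-1})$; a conditional Chebyshev/Markov step then shows this summand is $o_{\p}(n^{-1/2})$.

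It remains to bound $R_2(\hat Q, Q)$. Its integral part has three summands, each a product of two nuisance errors; using $1/\hat\omega_y \le \epsilon^{-1}$ and $1/\hat\pi_a \le \epsilon^{-1}$ (conditions 4 and 3), $|\eta_y|, |\hat\eta_y| \le 1$, and boundedness of $f'$ and $f''$ on the range of the regression functions, Cauchy--Schwarz (and passing between $L_2(Q)$ and $L_2(P)$) bounds it by $\norm{\hat\eta_y - \eta_y}\norm{\hat\mu_a - \mu_a} + \norm{\hat\pi_a - \pi_a}\norm{\hat\mu_a - \mu_a} + \norm{\hat\mu_a - \mu_a}^2$. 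The remaining piece of $R_2$, namely $\tfrac{\hat\omega_y - \omega_y}{\hat\omega_y}(\psi_{a,y}(Q) - \psi_{a,y}(\hat Q))$, equals $O_{\p}(|\hat\omega_y - \omega_y|)\cdot o_{\p}(1)$ because $\hat\omega_y$ is an empirical proportion (so $|\hat\omega_y - \omega_y| = O_{\p}(n^{-1/2})$) and $\psi_{a,y}(\hat Q)$ is consistent, hence it is $o_{\p}(n^{-1/2})$. Collecting the three steps, and absorbing any cross term $|\hat\omega_y - \omega_y|\norm{\hat\mu_a - \mu_a}$ into $|\hat\omega_y - \omega_y|^2 + \norm{\hat\mu_a - \mu_a}^2$ by the AM--GM inequality, gives the stated decomposition.

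I expect the remainder bound to be routine; the delicate part is the bookkeeping around $\hat\omega_y$ in the first and last steps --- one must check that estimating the outcome rate perturbs the expansion only at second order and, crucially, that the correct leading stochastic term is the full influence function $\varphi_{a,y}$ (including the $-\ind\{Y=y\}\psi_{a,y}/\omega_y$ correction) rather than the uncentered $\phi_{a,y}$. A minor secondary point is that the argument needs $f'$ and $f''$ bounded on the effective range of $\mu_a$ and $\hat\mu_a$; this is immediate from the hypotheses here, and when the theorem is invoked with $f = \logit$ it is exactly why the outcome-variance condition is imposed in Theorem~\ref{Theorem: Error of estimator}.
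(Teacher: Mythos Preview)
Your proposal is correct and follows the same strategy as the paper: both invoke Lemma~\ref{lemma: second order result for functions of regression functions} at $\bar Q=\hat Q$, control the empirical-process cross term via sample splitting (the paper cites Lemma~2 of \citet{kennedy2020sharp}), and bound the integral part of $R_2$ by Cauchy--Schwarz under conditions 3--5. The only substantive difference is the $\hat\omega_y$ bookkeeping you flagged as delicate. You cast the estimator in one-step form with the centered $\varphi_{a,y}$, so the $-\ind\{Y=y\}\psi_{a,y}/\omega_y$ correction is built into the leading term automatically and the residual first piece of $R_2$ lands in $o_P(n^{-1/2})$. The paper instead keeps the uncentered $\phi_{a,y}$: its term $A=\psi_{a,y}(Q)-Q(\phi_{a,y}(\hat Q))$ then contains a first-order piece $\tfrac{\hat\omega_y-\omega_y}{\hat\omega_y}\psi_{a,y}(Q)$, which is decomposed as $\tfrac{\hat\omega_y-\omega_y}{\omega_y}\psi_{a,y}(Q) - \tfrac{(\hat\omega_y-\omega_y)^2}{\omega_y\hat\omega_y}\psi_{a,y}(Q)$; the first summand is rewritten as $-(Q-Q_n)\big(\ind\{Y=y\}\psi_{a,y}/\omega_y\big)$ and merged into the leading empirical-process term, and the second is the explicit $|\hat\omega_y-\omega_y|^2$ in the remainder. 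The two routes are algebraically equivalent; the paper's makes the provenance of $|\hat\omega_y-\omega_y|^2$ transparent, while yours shows that this term is already subsumed by the final $o_P(n^{-1/2})$. One small caution: with $\hat\omega_y = Q_n(\ind\{Y=y\})$, the function $\varphi_{a,y}(\cdot;\hat Q)$ is \emph{not} literally fixed given the training fold, so your sample-splitting sentence needs an extra line factoring out the scalar $1/\hat\omega_y$ and the $\psi_{a,y}(\hat Q)\ind\{Y=y\}$ piece before applying the conditional-variance bound; the paper's term $B$ carries the analogous $1/\hat\omega_y$ dependence and elides it in the same way.
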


\subsubsection*{Proof of Theorem~\ref{Theorem: Error of generic function of regression functions}}
\begin{proof}
\begin{align*}
\psi_{a,y}(Q) - \qn (\phi_{a,y}(Z; \hat{\mu}_a, \hat{\eta}_y, \hat{\pi}_a)  &= \overbrace{\psi_{a,y}(Q) - Q(\phi_a(Z; \hat{Q}))}^{A} + \overbrace{(Q- \qn) (\phi_a(Z; \hat{Q}) - \phi_a(Z; Q))}^B \\&+ \overbrace{(Q- \qn) (\phi_a(Z; Q))}^{C}
\end{align*}

For term A, we have that $\psi_{a,y}(Q) - Q(\phi_a(Z; \hat{Q}))$
\begin{align}
  &= \psi_{a,y}(Q) - \hat \psi_{a,y}(Q)  + \hat \psi_{a,y}(Q)\frac{\hat \omega_y - \omega_y}{\hat \omega_y} - \int \varphi_{a,z}(Z; \hat Q) dQ \\
  \begin{split} \label{equation: term A of general error result}
 &=  \frac{\hat \omega_y - \omega_y}{\hat \omega_y} \psi_{a,y}(Q) + \frac{1}{\hat \omega_y} \int  \Big( f{'}(\hat \mu_a(x))\big( \eta_y(x) - \hat \eta_y(x) \big) (\mu_a(x) - \hat \mu_a(x)) \\
    &+ \hat \eta_y(x) \big( f{'}(\hat \mu_a(x)) (\mu_a(x) - \hat \mu_a(x))\frac{\hat \pi_a(x) - \pi_a(x)}{\hat \pi_a (x)} + \eta(x)f{''}((\mu^{*}_a(x))\frac{(\mu_a(x) - \hat \mu_a(x))^2}{2}\Big)d
    Q
  \end{split}
\end{align}
where the first equality holds by definition (Eq.~\ref{equation: estimator for generic function of regression functions} and~\ref{equation: influence function for generic function of regression functions}) and the second equality uses Lemma~\ref{lemma: second order result for functions of regression functions}.

We can write the first term as
\begin{align*}
 \frac{\hat \omega_y - \omega_y}{\hat \omega_y} \psi_{a,y}(Q)  &=  (\frac{\hat \omega_y - \omega_y}{\hat \omega_y} - \frac{\hat \omega_y - \omega_y}{\omega_y}) \psi_{a,y}(Q) + \frac{\hat \omega_y - \omega_y}{\omega_y} \psi_{a,y}(Q) \\
     &=   \frac{\hat \omega_y - \omega_y}{\omega_y} \psi_{a,y}(Q)  - \frac{(\hat \omega_y - \omega_y)^2}{\omega_y\hat \omega_y} \psi_{a,y}(Q)  \\
     & = -(Q-Q_n)\bigg(\frac{\Ind{Y=y}\psi_{a,y}(Q)}{\omega_y}\bigg)  + O_P \Big( \left|\hat \omega_y - \omega_y \right|^2 \Big)
\end{align*}
where the last line applies the conditions given in Theorem~\ref{Theorem: Error of generic function of regression functions}.

Using this result and again applying applies the conditions given in Theorem~\ref{Theorem: Error of generic function of regression functions} to our expression for Term A in Eq.~\ref{equation: term A of general error result}, we have that $\psi_{a,y}(Q) - Q(\phi_a(Z; \hat{Q}))$
\begin{align*}
  & = -(Q-Q_n)\bigg(\frac{\Ind{Y=y}\psi_{a,y}(Q)}{\omega_y}\bigg)  +  O_P \Big(\left|\hat \omega_y - \omega_y \right|^2  \\& \quad + \int (\eta_y(x) - \hat \eta_y(x))( \mu_a(x) - \hat \mu_a(x)) + (\hat \mu_a(x) - \mu_a(x)))( \pi_a(x) - \hat \pi_a(x)) + (\hat \mu_a(x) - \mu_a(x))^2 dQ \Big) \\
    & = -(Q-Q_n)\bigg(\frac{\Ind{Y=y}\psi_{a,y}(Q)}{\omega_y}\bigg)  \\
    &+ O_P \Big( \left|\hat \omega_y - \omega_y \right|^2  + \norm{\hat \eta_y - \eta_y}\norm{\hat \mu_a - \mu_a} + \norm{ \hat \mu_a - \mu_a}\norm{\hat \pi_a - \pi_a} + \norm{ \hat \mu_a - \mu_a}^2 \Big)
\end{align*}
where  the second equality uses Cauchy-Schwarz inequality.

For term B, since $Q_n$ is the empirical measure on an independent sample from $\hat{Q}$, we can apply Lemma 2 of \citet{kennedy2020sharp} with our assumption that $\norm{\phi_{a,y}(Z; \hat{Q}) - \phi_{a,y}(Z; Q)} = o_{\p}(1)$:
\begin{align*}
    (Q- \qn) (\phi_{a,y}(Z; \hat{Q}) - \phi_{a,y}(Z; Q)) = O_{\p}\Big(\frac{\norm{(\phi_{a,y}(Z; \hat{Q}) - \phi_{a,y}(Z; Q)}}{\sqrt{n}}\Big) = o_{\p}\Big(\frac{1}{\sqrt{n}}\Big)
\end{align*}

We combine the results for term A and term B with term C to obtain:
\begin{align*}
\psi_{a,y}(Q) - \qn (\phi_{a,y}(Z; \hat{\mu}_a, \hat{\eta}_y, \hat{\pi}_a))  &= (Q- \qn) \bigg(\phi_a(Z; Q)) -\frac{\Ind{Y=y}\psi_{a,y}(Q)}{\omega_y}\bigg)  + O_P \Bigg( \left|\hat \omega_y - \omega_y \right|^2  \\&+ \norm{\hat \eta_y - \hat \eta_y}\norm{\hat \mu_a - \mu_a} + \norm{ \hat \mu_a - \mu_a}\norm{\hat \pi_a - \pi_a} + \norm{ \hat \mu_a - \mu_a}^2 \Bigg) + o_{\p}\Big(\frac{1}{\sqrt{n}}\Big)
\end{align*}

\end{proof}

\subsection{General efficiency theory for $\psi_{a,y}$} \label{appendix: general efficiency theory psi_ay}

We provide the generalization of Lemma~\ref{lemma: remainder term for psi_01} for any $(a,y) \in \{0,1\}^2$.
\begin{lemma}
\label{lemma: remainder term for psi_ay}

We have the following von Mises expansion for $\psi_{a,y}$
\begin{align*}
\psi_{a,y}(Q) &= \psi_{a,y}(\bar Q) + \int \varphi_{a,y}(\bar Q) d(Q-\bar Q) + R_2(\bar{Q}, Q) \\
 R_2(\bar{Q}, Q) &=  \frac{\bar \omega_y - \omega_y}{\bar \omega_y} \big(\psi_{a,y}(Q) - \psi_{a,y}(\bar Q)\big) + \frac{1}{\bar \omega_y}\int  \Big( \frac{\mu_a(x) - \bar \mu_a(x)}{\bar \mu_a(x)(1-\bar \mu_a(x))}\big( \eta_y(x) - \bar \eta_y(x) \big)  \\
    &+ \bar \eta_y(x) \frac{\mu_a(x) - \bar \mu_a(x)}{\bar \mu_a(x)(1-\bar \mu_a(x))} \frac{\bar \pi_a(x) - \pi_a(x)}{\bar \pi_a (x)} \\
    &+ \eta(x) \frac{\mu^{*}_a(x) -1/2}{\mu^{*}_a(x)^2(1-\mu^{*}_a(x))^2}\big(\mu_a(x) - \bar \mu_a(x)\big)^2\Big)dQ
\end{align*}

where $\mu_a^{*}(x)$ lies between $\bar{\mu}_a(x)$ and $\mu_a(x)$  and 
\begin{align*}
  \varphi_{a,y}(Z) &= \frac{\eta_y(X)\mathrm{logit}(\mu_a(X))}{\omega_y} - \psi_{a,y}  \\& +\frac{\eta_y(X)}{\omega_y} \frac{\Ind{A= a}(Y - \mu_a(X)}{\pi_a(X)\mu_a(X)(1-\mu_a(X))} + \frac{\mathrm{logit}(\mu_a(X))}{\omega_y}(\Ind{Y=y}- \eta_y(X)) \\
  &+ \psi_{a,y} - \frac{\psi_{a,y} \Ind{Y= y}}{\omega_y}
\end{align*}

Since the remainder term $R_2(\bar{Q}, Q)$ is a product of the nuisance function errors, we can apply Lemma 2 of \cite{kennedy2021semiparametric} to conclude that $\psi_{a,y}(Q)$ is pathwise differentiable with efficient influence function $\varphi_{a,y}(z; Q)$.
\end{lemma}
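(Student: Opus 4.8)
The plan is to prove Lemma~\ref{lemma: remainder term for psi_ay} by specializing the generic von Mises expansion of Lemma~\ref{lemma: second order result for functions of regression functions} to the choice $f = \logit$, since $\psi_{a,y} = \E[\logit(\mu_a(X)) \mid Y = y]$ is exactly of the form $\E[f(\mu_a(X)) \mid Y=y]$ with $f(u) = \log\frac{u}{1-u}$. First I would compute the relevant derivatives: $f'(u) = \frac{1}{u(1-u)}$ and $f''(u) = \frac{2u - 1}{u^2(1-u)^2}$. Substituting $f'(\bar\mu_a(x)) = \frac{1}{\bar\mu_a(x)(1-\bar\mu_a(x))}$ into the three terms of the generic remainder $R_2(\bar Q, Q)$ from Lemma~\ref{lemma: second order result for functions of regression functions} yields the first two lines of the claimed $R_2$ here; substituting $f''(\mu_a^*(x)) = \frac{\mu_a^*(x) - 1/2}{\mu_a^*(x)^2(1-\mu_a^*(x))^2}$ (after absorbing the factor $2$ against the $\tfrac12$ in front of the squared term) gives the third line. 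Similarly, plugging $f$ and $f'$ into Eq.~\ref{equation: influence function for generic function of regression functions} reproduces the stated $\varphi_{a,y}(Z)$, after regrouping the terms $\frac{\eta_y(X)f(\mu_a(X))}{\omega_y} + \frac{f(\mu_a(X))}{\omega_y}(\Ind{Y=y} - \eta_y(X))$ as $\frac{\eta_y(X)\logit(\mu_a(X))}{\omega_y} + \frac{\logit(\mu_a(X))}{\omega_y}(\Ind{Y=y}-\eta_y(X))$ and likewise matching the $-\psi_{a,y}\Ind{Y=y}/\omega_y$ and $+\psi_{a,y}$ pieces (the statement's $-\psi_{a,y} \cdots + \psi_{a,y}$ bracketing is just a rewriting of $\varphi_{a,y} - \E[\varphi_{a,y}]$ style bookkeeping).

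The remaining substantive point is the hypothesis of Lemma~\ref{lemma: second order result for functions of regression functions}: it requires $f$ twice differentiable, which $\logit$ is on $(0,1)$, and it requires the target $\psi_{a,y}$ to be bounded and well-defined, which here needs $\mu_a(x)$ bounded away from $0$ and $1$ so that $\logit(\mu_a(x))$ is integrable --- exactly the outcome-variance condition invoked elsewhere in the paper (Condition 4 of Theorem~\ref{Theorem: Error of estimator}). I would state that we work on the model where this boundedness holds, so that Lemma~\ref{lemma: second order result for functions of regression functions} applies verbatim. Then the pathwise differentiability conclusion, with efficient influence function $\varphi_{a,y}(z;Q)$, follows by the same appeal to Lemma 2 of \cite{kennedy2021semiparametric}: the remainder $R_2(\bar Q, Q)$ is a sum of second-order products of nuisance errors ($\omega$-error, $\eta$-times-$\mu$ error, $\pi$-times-$\mu$ error, and squared $\mu$-error), so it is $o(\|\bar Q - Q\|)$ along smooth parametric submodels.

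The main obstacle --- really the only non-mechanical step --- is confirming that the algebraic regrouping of the generic formulas into the specific $\logit$ form matches the stated expressions exactly, in particular tracking the factor of $2$ between $f''$ and the $\tfrac12(\mu_a - \bar\mu_a)^2$ coefficient, and checking that the extra $\frac{\eta(x)}{\bar\omega_y}$ weighting on the squared term (written as $\eta(x)$, not $\eta_y(x)$, in the statement) is consistent with how Lemma~\ref{lemma: second order result for functions of regression functions} carries the $\eta$ weight inside the second-order term. Since the proof the paper actually gives is the one-line ``Immediate from Lemma~\ref{lemma: second order result for functions of regression functions} with $f = \logit$'' (cf.\ the proof of Lemma~\ref{lemma: remainder term for psi_01}), I would present it at that level of brevity: state $f = \logit$, record $f'$ and $f''$, note the boundedness of $\mu_a$ makes the target well-defined, and cite Lemma~\ref{lemma: second order result for functions of regression functions} together with Lemma 2 of \cite{kennedy2021semiparametric} for the conclusion.
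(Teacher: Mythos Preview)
Your proposal is correct and matches the paper's own proof essentially line for line: the paper sets $f(x) = \logit(x)$, records $f'(x) = \frac{1}{x(1-x)}$ and $f''(x) = \frac{2x-1}{x^2(1-x)^2}$, and then applies Lemma~\ref{lemma: second order result for functions of regression functions} verbatim to obtain the stated $R_2$ and $\varphi_{a,y}$. Your additional remarks on the boundedness of $\mu_a$ and the factor-of-$2$ bookkeeping are accurate and slightly more explicit than the paper, but the argument is the same.
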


\subsubsection*{Proof of Lemma~\ref{lemma: remainder term for psi_ay}}
\begin{proof}
Let $f(x) = \mathrm{logit}(x)$. Then $f^{'}(x) = \frac{1}{x(1-x)}$ and $f^{''}(x) = \frac{2x-1}{x^2(1-x)^2}$. We apply Lemma~\ref{lemma: second order result for functions of regression functions} for $\psi_{a,y}$ to get 
\begin{align}
\begin{split}
\psi_{a,y}(Q) &= \psi_
{a,y}(\bar Q) + \int \varphi_{a,y}(\bar Q) d(Q-\bar Q) + R_2(\bar{Q}, Q)\\ 
 R_2(\bar{Q}, Q) &=  \frac{\bar \omega_y - \omega_y}{\bar \omega_y} \big(\psi_{a,y}(Q) - \psi_{a,y}(\bar Q)\big) + \frac{1}{\bar \omega_y}\int  \Big( \frac{\mu_a(x) - \bar \mu_a(x)}{\bar \mu_a(x)(1-\bar \mu_a(x))}\big( \eta_y(x) - \bar \eta_y(x) \big)  \\
    &+ \bar \eta_y(x) \frac{\mu_a(x) - \bar \mu_a(x)}{\bar \mu_a(x)(1-\bar \mu_a(x))} \frac{\bar \pi_a(x) - \pi_a(x)}{\bar \pi_a (x)} \\
    &+ \eta(x) \frac{\mu^{*}_a(x) -1/2}{\mu^{*}_a(x)^2(1-\mu^{*}_a(x))^2}\big(\mu_a(x) - \bar \mu_a(x)\big)^2\Big)dQ
\end{split}
\end{align}

where $\mu_a^{*}(x)$ is between $\bar{\mu}_a(x)$ and $\mu_a(x)$ and $\varphi_{a,y}$ is 
\begin{align*}
   \varphi_{a,y}(z;Q) &= \frac{\logit(\mu_a(X))}{\omega_y}\ind\{Y = y\}  +\frac{\eta_y(X)}{\omega_y } \frac{\ind\{A =a\}(Y - \mu_a(X))}{\pi_a(X)\mu_a(X) \big(1- \mu_a(X) \big)} - \frac{\psi_{a,y} \ind\{Y = y\}}{\omega_y}
\end{align*}

\end{proof}

\subsection{Proof of Theorem~\ref{theorem: efficiency bound for variance of influence function for geometric mean}} \label{appendix: proof of efficiency bound}
\begin{proof}
We first state some preliminaries that will be useful in proving the result.

First,  since $A$ is a binary random variable, then for any function $f(x,y)$,
\begin{align} \label{equation: expectation of two different values of binary rv is zero} 
    \E[\Ind{A=1}\Ind{A=0}f(X,Y)] = 0
\end{align}

Second, 
\begin{align}
    \E[\Ind{A=a}(Y -\mu_a(X))^2 \mid X] & = \mu_a(X)(1-\mu_a(X))\pi_a(X) \label{equation: A=a mu_a residual squared} \\
    \E[Y\Ind{A=a}(Y -\mu_a(X)) \mid X] &=  \mu_a(X)(1-\mu_a(X)) \pi_a(X)  \label{equation: Y times A mu_a residual}  \\
    \E[(1-Y)\Ind{A=a}(Y -\mu_a(X)) \mid X] &=  -\mu_a(X)(1-\mu_a(X)) \pi_a(X) \label{equation: 1-Y times A mu_a residual} 
\end{align}


Using the influence function $\IF(\log(\gamma))  = \sum_{y=0}^1 \rho_y (\varphi_{1,y}(Z)-\varphi_{0,y}(Z))$, we can write the variance as 

\begin{align*}
    &\var\Big(\sum_{y=0}^1  \rho_y (\varphi_{1,y}(Z)-\varphi_{0,y}(Z)) \Big) = \\
    &\var\Bigg(\underbrace{\sum_{y=0}^1  \rho_y \Big(\frac{\Ind{Y= y}~\mathrm{logit}(\mu_1(X))}{\omega_y} -\frac{\Ind{Y= y}~\mathrm{logit}(\mu_0(X))}{\omega_y} \Big) - \sum_{y=0}^1  \rho_y \Big(\psi_{1,y}-\psi_{0,y} \Big)}_1 \\
    &+\underbrace{\sum_{y=0}^1  \rho_y \Big(\frac{\eta_y(X)}{\omega_y} \frac{\Ind{A= 1}(Y - \mu_1(X)}{\pi_1(X)\mu_1(X)(1-\mu_1(X))} - \frac{\eta_y(X)}{\omega_y} \frac{\Ind{A= 0}(Y - \mu_0(X)}{\pi_0(X)\mu_0(X)(1-\mu_0(X))} \Big)}_2 \\
    &+  \underbrace{\sum_{y=0}^1  \rho_y \Big( \psi_{1,y} - \psi_{0,y} \Big)  -  \sum_{y=0}^1  \rho_y \Big(\frac{\psi_{1,y} \Ind{Y= y}}{\omega_y}  - \frac{\psi_{0,y} \Ind{Y= y}}{\omega_y}\Big)}_3  \Bigg) = \\
      &\var\Bigg( \underbrace{\Psi(X,Y) - \log(\gamma)}_1  +  \underbrace{\log(\gamma) - \Psi^{*}(Y)}_3 \\
       &+ \underbrace{\Big(\rho_0\frac{\eta_0(X)}{\omega_0} + \rho_1\frac{\eta_1(X)}{\omega_1} \Big) \Big(\frac{\Ind{A= 1}(Y - \mu_1(X))}{\pi_1(X)\mu_1(X)(1-\mu_1(X))} - \frac{\Ind{A= 0}(Y - \mu_0(X))}{\pi_0(X)\mu_0(X)(1-\mu_0(X))} \Big)}_2  \Bigg) 
\end{align*}
where the first equality applies Eq.~\ref{equation: influence function of psi a y}, and the second applies Eq.~\ref{equation: definition of big PSI} and~\ref{equation: definition of big PSI STAR}. 

Now let's consider the variance of term 2. Observe that term 2 has mean zero, so the variance of term 2 is: 
\begin{align*}
& \E \Bigg[\Big(\rho_0\frac{\eta_0(X)}{\omega_0} + \rho_1\frac{\eta_1(X)}{\omega_1} \Big)^2 \Big(\frac{\Ind{A= 1}(Y - \mu_1(X)}{\pi_1(X)\mu_1(X)(1-\mu_1(X))} - \frac{\Ind{A= 0}(Y - \mu_0(X)}{\pi_0(X)\mu_0(X)(1-\mu_0(X))} \Big) \Big)^2 \Bigg]= \\
 & \E \Bigg[ \Big(\rho_0\frac{\eta_0(X)}{\omega_0} + \rho_1\frac{\eta_1(X)}{\omega_1} \Big)^2  \Big( \frac{1}{\pi_1(X)\mu_1(X)(1-\mu_1(X))} +  \frac{1}{\pi_0(X)\mu_0(X)(1-\mu_0(X))} \Big) \Bigg]
\end{align*}

where we used iterated expectation, the fact that $\Ind{A=a}^2 = \Ind{A=a}$, Eq.~\ref{equation: A=a mu_a residual squared}, and Eq.~\ref{equation: expectation of two different values of binary rv is zero}.
We now convert to the notation scheme used in the main paper where $p = \rho_1$ and $\eta(x) = \eta_1(x)$ to rewrite 
\begin{align*}
    \Big(\rho_0\frac{\eta_0(X)}{\omega_0} + \rho_1\frac{\eta_1(X)}{\omega_1} \Big)  
    &= \frac{1-\rho}{1-\omega} + \Bigg(\frac{\rho - \omega}{\omega(1-\omega)} \Bigg)\eta(X)\\
\end{align*}


Now let's consider the covariance: 
\begin{align*}
    &\covar\Bigg(\Psi(X,Y) - \Psi^{*}(Y), \Big(\rho_0\frac{\eta_0(X)}{\omega_0} + \rho_1\frac{\eta_1(X)}{\omega_1} \Big) \Big(\frac{\Ind{A= 1}(Y - \mu_1(X))}{\pi_1(X)\mu_1(X)(1-\mu_1(X))} - \frac{\Ind{A= 0}(Y - \mu_0(X))}{\pi_0(X)\mu_0(X)(1-\mu_0(X))} \Big)  \Bigg) = \\
    &\E\Bigg[\Big(\Psi(X,Y) - \Psi^{*}(Y) \Big) \Big(\rho_0\frac{\eta_0(X)}{\omega_0} + \rho_1\frac{\eta_1(X)}{\omega_1} \Big) \Big(\frac{\Ind{A= 1}(Y - \mu_1(X))}{\pi_1(X)\mu_1(X)(1-\mu_1(X))} - \frac{\Ind{A= 0}(Y - \mu_0(X))}{\pi_0(X)\mu_0(X)(1-\mu_0(X))} \Big)  \Bigg] = \\
    &\E\Bigg[ \Big(\rho_0\frac{\eta_0(X)}{\omega_0} + \rho_1\frac{\eta_1(X)}{\omega_1} \Big) \E \Big[ \Big(\Psi(X,Y) - \Psi^{*}(Y) \Big)  \Big(\frac{\Ind{A= 1}(Y - \mu_1(X))}{\pi_1(X)\mu_1(X)(1-\mu_1(X))} - \frac{\Ind{A= 0}(Y - \mu_0(X))}{\pi_0(X)\mu_0(X)(1-\mu_0(X))} \Big) \mid X \Big]  \Bigg] \\
    &= 0
\end{align*}
where the first equality made use of the fact that both terms are mean zero, the second  applies iterated expectation, and the last line used Eqs.~\ref{equation: Y times A mu_a residual}-\ref{equation: 1-Y times A mu_a residual}.

Putting this all together gives 
\begin{align*}
          &\var(\IF(\log(\gamma))) = \var \big(\Psi(X,Y)\big)  +  \var \big(\Psi^{*}(Y)\big) - 2 \covar\Big(\Psi(X,Y), \Psi^{*}(Y) \Big) \\
    &+ \E \Bigg[ \Bigg(\frac{1-\rho}{1-\omega} + \Bigg(\frac{\rho -\omega}{\omega(1-\omega)} \Bigg)\eta(X) \Bigg)^2  \Big( \frac{1}{\pi_1(X)\mu_1(X)(1-\mu_1(X))} +  \frac{1}{\pi_0(X)\mu_0(X)(1-\mu_0(X))} \Big) \Bigg]
\end{align*}

\end{proof}


\subsection{Additional results: Random sampling} \label{section: efficiency under random sampling}

In this section we consider the simpler setting of estimating $\gamma$ under random sampling. Although causal estimands like the risk difference are more commonly used in random sampling, the odds ratio may nonetheless be of interest. Additionally, this analysis will differentiate which of the factors contributing to the difficulty of estimation (see Section~\ref{section: efficiency}) originate from the choice of estimand versus which result from the sampling scheme.

Under random sampling, we can point identify $\gamma$ (see Eq.~\ref{equation: identification of GOR under random sampling}).

\begin{lemma} \label{lemma: remainder term for target under random sampling}
We have the following von Mises expansion for the target geometric mean:


\begin{align}
    \gamma(P) = \gamma(\bar{P}) + \gamma(\bar{P})\int \phi(\bar{P})d(P- \bar{P})) + R_2(\bar{P}, P)
\end{align}
with 
\begin{align*}
R_2(\bar{P}, P) &= \int \frac{(\bar{\pi}(x) - \pi(x))(\mu_1(x) -\bar{\mu}_1(x))}{\bar{\mu}_1(x)(1-\bar{\mu}_1(x))\bar{\pi}(x)} + \frac{(\pibar(x) - \pi(x))(\mu_0(x) -\bar{\mu}_0(x))}{\bar{\mu}_0(x)(1-\bar{\mu}_0(x))(1-\pibar(x))} \\
 &+ \frac{(\mu_1^{*}(x)-1/2)(\mu_1(x) -\bar{\mu}_1(x))^2}{\mu_1^{*}(x)^2(1-\mu_1^{*}(x))^2} - \frac{(\mu_0^{*}(x)-1/2)(\mu_0(x) -\bar{\mu}_0(x))^2}{\mu_0^{*}(x)^2(1-\mu_0^{*}(x))^2}dP(x)
\end{align*}

where $\mu_a^{*}(x)$ lies between $\bar{\mu}_a(x)$ and $\mu_a(x)$ for $a \in \{0,1\}$ and 
\begin{align*}
    \varphi(z; P) = \log\Bigg(\frac{\frac{\mu_1(X)}{1-\mu_1(X)}}{\frac{\mu_0(X)}{1-\mu_0(X)}}\Bigg) - \log(\gamma) + \frac{A(Y - \mu_1(X))}{\mu_1(X)(1-\mu_1(X))\pi(X)}  -  \frac{(1-A)(Y - \mu_0(X))}{\mu_0(X)(1-\mu_0(X))(1-\pi(X))}.
\end{align*}

\end{lemma}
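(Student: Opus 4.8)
The plan is to prove the expansion in two stages: first derive a von Mises expansion for $\theta(P) := \log\gamma(P)$, and then lift it to $\gamma(P) = e^{\theta(P)}$ via a Taylor expansion of the exponential, exactly as in the proof of Theorem~\ref{theorem: remainder term for target}. Under random sampling, Eq.~\ref{equation: identification of GOR under random sampling} together with Assumptions~\ref{assumption:consistency}--\ref{assumption:positivity} gives $\theta(P) = \int\big(\logit(\mu_1(x)) - \logit(\mu_0(x))\big)\,dP(x)$, where $\mu_a(x) = P(Y=1\mid X=x,A=a)$ and we write $\pi := \pi_1 = 1-\pi_0$. Thus $\theta$ is an $X$-average of a smooth ($f=\logit$) transformation of the outcome regressions, and the argument is essentially the random-sampling specialization of Lemma~\ref{lemma: second order result for functions of regression functions} (the reweighting $\eta_y/\omega_y$ is replaced by $1$), applied separately to $a=1$ and $a=0$ and then differenced.

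For each $a$ I would write
\[
\int \logit(\mu_a)\,dP - \int \logit(\bar\mu_a)\,d\bar P = \int \logit(\bar\mu_a)\,d(P-\bar P) + \int \big(\logit(\mu_a) - \logit(\bar\mu_a)\big)\,dP ,
\]
and Taylor-expand the integrand of the second term using $f'(t)=\tfrac1{t(1-t)}$ and $f''(t)=\tfrac{2t-1}{t^2(1-t)^2}$, producing a first-order piece $\int \tfrac{\mu_a - \bar\mu_a}{\bar\mu_a(1-\bar\mu_a)}\,dP$ and a quadratic remainder $\int \tfrac{\mu_a^{*} - 1/2}{(\mu_a^{*})^2(1-\mu_a^{*})^2}(\mu_a - \bar\mu_a)^2\,dP$ with $\mu_a^{*}(x)$ between $\bar\mu_a(x)$ and $\mu_a(x)$; the quadratic term goes directly into $R_2$. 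To rewrite the first-order piece I would use the residual identity $\E_P\!\big[\Ind{A=a}(Y-\bar\mu_a(X))/\bar\pi_a(X)\mid X\big]=\tfrac{\pi_a(X)}{\bar\pi_a(X)}(\mu_a(X)-\bar\mu_a(X))$ and $\pi_a/\bar\pi_a = 1 + (\pi_a-\bar\pi_a)/\bar\pi_a$, which yields
\[
\int \frac{\mu_a - \bar\mu_a}{\bar\mu_a(1-\bar\mu_a)}\,dP = \int \frac{\Ind{A=a}(Y-\bar\mu_a(X))}{\bar\mu_a(X)(1-\bar\mu_a(X))\bar\pi_a(X)}\,d(P-\bar P) + \int \frac{(\bar\pi_a - \pi_a)(\mu_a - \bar\mu_a)}{\bar\mu_a(1-\bar\mu_a)\bar\pi_a}\,dP ,
\]
using that the first integrand has $\bar P$-mean zero. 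The cross term $(\bar\pi_a-\pi_a)(\mu_a-\bar\mu_a)$ joins $R_2$. Collecting the $a=1$ and $a=0$ contributions — the $a=0$ terms carrying an overall minus sign since $\theta$ contains $-\logit(\mu_0)$, and using $\int\theta(\bar P)\,d(P-\bar P)=0$ to introduce the centering constant $-\log\gamma(\bar P)$ — reproduces the linear term $\int\varphi(\bar P)\,d(P-\bar P)$ with $\varphi$ as displayed, and assembles the four summands of $R_2(\bar P,P)$.

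Finally, applying Taylor's theorem to $\exp$,
\[
\gamma(P) = \gamma(\bar P) + \gamma(\bar P)\big(\theta(P)-\theta(\bar P)\big) + \tfrac12\,\gamma^{*}\big(\theta(P)-\theta(\bar P)\big)^2 ,
\]
with $\log\gamma^{*}$ between $\log\gamma(\bar P)$ and $\log\gamma(P)$, and substituting the $\theta$-expansion gives the claim; the factor $\gamma(\bar P)$ multiplying the remainder and the exp-remainder $\tfrac12\gamma^{*}(\theta(P)-\theta(\bar P))^2$ are both second order in the nuisance errors and absorbed into $R_2$, exactly as in Theorem~\ref{theorem: remainder term for target}. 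Since each term of $R_2$ is a product of nuisance errors, Lemma~2 of \citet{kennedy2021semiparametric} then also delivers pathwise differentiability with efficient influence function $\gamma(\bar P)\varphi(z;\bar P)$. The only real obstacle is bookkeeping: deciding which terms are first order (hence belong inside $d(P-\bar P)$ and build $\varphi$) versus genuinely second order, and tracking signs — in particular the $a=0$ cross term ends up with the \emph{same} sign as the $a=1$ one (the minus from $-\logit(\mu_0)$ cancels the minus from $\pi_0 = 1-\pi$), whereas the $a=0$ quadratic remainder keeps its minus. There is no analytic difficulty beyond what Lemma~\ref{lemma: second order result for functions of regression functions} and Theorem~\ref{theorem: remainder term for target} already handle; this lemma is their random-sampling counterpart.
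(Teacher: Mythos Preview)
Your proposal is correct and follows essentially the same approach as the paper: Taylor-expand $\exp$ to reduce to the von Mises expansion of $\log\gamma$, then obtain the latter by applying (the random-sampling specialization of) Lemma~\ref{lemma: second order result for functions of regression functions} with $f=\logit$ separately to $a=1$ and $a=0$ and differencing. The paper packages the $\log\gamma$ step into a separate lemma (Lemma~\ref{lemma: remainder term under random sampling}), but the content is identical, and your sign bookkeeping for the cross and quadratic terms is right.
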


\begin{proof} \label{proof of corollary: remainder term for target under random sampling}
For a $\gamma^{*}$ such that $\log(\gamma^{*})$ lies between $\log(\gamma(\bar{P}))$ and $\log(\gamma(P))$, applying Taylor's Theorem  
\begin{align}
    \gamma(P) &= \gamma(\bar{P}) + \gamma(\bar{P})(\log(\gamma(P)) - \log(\gamma(\bar{P}))) + \frac{1}{2} (\log(\gamma(P)) - \log(\gamma(\bar{P})))^2\gamma^{*}\\
    &= \gamma(\bar{P}) + \gamma(\bar{P})\int \varphi(\bar{P})d(P- \bar{P})) + R_2(\bar{P}, P)
\end{align} 

where the second equality applies Lemma~\ref{lemma: remainder term under random sampling}.

Then, by Lemma 2 of \citep{kennedy2021semiparametric}, our target $\gamma(P)$ is pathwise differentiable with influence function $\gamma(P)\phi(z;P)$.
\end{proof}

\begin{theorem} \label{theorem: efficiency bound for gamma under random sampling} The nonparametric efficiency bound for estimating $\gamma$ under random sampling is given by $\sigma^2 := \gamma^2 \var(\phi(Z))$ where
\begin{align*}
        \var(\phi(Z)) &= \E \Bigg[ \frac{1}{\pi(X)\muone(X)(1-\muone(X))} + \frac{1}{(1-\pi(X))\muz(X)(1-\muz(X))} + \Bigg(\log\Big(\frac{\frac{\muone(X)}{1-\muone(X)}}{\frac{\muz(X)}{1-\muz(X)}} \Big) - \log(\gamma) \Bigg)^2 \Bigg]
\end{align*}
\end{theorem}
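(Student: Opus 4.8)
The plan is to read off the efficient influence function from the von Mises expansion already established in Lemma~\ref{lemma: remainder term for target under random sampling}: that lemma shows $\gamma(P)\varphi(z;P)$ is the efficient influence function of $\gamma$ in the nonparametric model, where $\varphi$ is the function displayed there (the $\phi$ of the theorem statement). Since the nonparametric efficiency bound equals the variance of the efficient influence function, and $\gamma$ is a fixed scalar, I would immediately write $\sigma^2 = \gamma^2\var(\varphi(Z))$, reducing the proof to an exact computation of $\var(\varphi(Z))$.

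For that computation I would decompose $\varphi(z;P) = T_1(X) + T_2(Z)$, where $T_1(X) := \logit(\muone(X)) - \logit(\muz(X)) - \log(\gamma)$ is the centered conditional log odds ratio and $T_2(Z)$ collects the two inverse-probability-weighted residual terms $\frac{A(Y-\muone(X))}{\muone(X)(1-\muone(X))\pi(X)}$ and $-\frac{(1-A)(Y-\muz(X))}{\muz(X)(1-\muz(X))(1-\pi(X))}$. Two structural facts drive everything: (i) $\E[T_1(X)] = 0$, by the point identification $\log\gamma = \E[\logit(\muone(X)) - \logit(\muz(X))]$ under random sampling (Eq.~\ref{equation: identification of GOR under random sampling}); and (ii) $\E[T_2(Z)\mid X] = 0$, since $\E[A(Y-\muone(X))\mid X] = \pi(X)\muone(X) - \muone(X)\pi(X) = 0$ and analogously for the $(1-A)$ term. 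Hence $\var(\varphi) = \E[T_1^2] + \E[T_2^2] + 2\E[T_1 T_2]$, and the cross term vanishes by iterated expectation, $\E[T_1 T_2] = \E[T_1(X)\,\E[T_2(Z)\mid X]] = 0$, because $T_1$ is a function of $X$ alone.

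Next I would evaluate $\E[T_2^2]$. Because $A(1-A) = 0$, the product of the two summands of $T_2$ is identically zero, so $\E[T_2^2]$ is just the sum of the expectations of the two squared terms. Conditioning on $X$ and using $\E[A(Y-\muone(X))^2\mid X] = \pi(X)\,\E[(Y-\muone(X))^2\mid X,A=1] = \pi(X)\muone(X)(1-\muone(X))$ (the conditional Bernoulli variance, the random-sampling analogue of Eq.~\ref{equation: A=a mu_a residual squared}), the treated term collapses to $\E\big[1/(\pi(X)\muone(X)(1-\muone(X)))\big]$; the control term is handled identically, giving $\E\big[1/((1-\pi(X))\muz(X)(1-\muz(X)))\big]$. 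Adding $\E[T_1^2] = \E\big[(\logit(\muone(X)) - \logit(\muz(X)) - \log\gamma)^2\big]$ and rewriting $\logit(\mu_a(x)) = \log(\mu_a(x)/(1-\mu_a(x)))$ yields exactly the claimed formula.

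The argument is a short chain of routine conditional-moment calculations; the only part requiring genuine care is the bookkeeping that makes all three cross terms drop out ($T_1$ against each residual term, and the two residual terms against each other), so that the bound separates cleanly into a conditional-heterogeneity piece $\E[(\logit\muone - \logit\muz - \log\gamma)^2]$ and two ``estimation-noise'' inverse-probability-weighted variance pieces. I anticipate no real obstacle beyond keeping track of which quantities are mean zero unconditionally versus conditionally on $X$.
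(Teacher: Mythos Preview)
Your proposal is correct and follows essentially the same approach as the paper's proof: decompose $\varphi$ into the centered log-odds-ratio term and the weighted-residual term, use $\E[T_2(Z)\mid X]=0$ (the paper's Eqs.~\eqref{equation: AY muone has zero expectation}--\eqref{equation: 1-A Y muzero has zero expectation}) to kill the cross term, then use $A(1-A)=0$ and the Bernoulli conditional variance to collapse $\E[T_2^2]$. The paper's argument is identical in substance, just without naming $T_1$ and $T_2$ explicitly.
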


Theorem~\ref{theorem: efficiency bound for gamma under random sampling} indicates that the difficulty of estimation problem under random sampling depends only on three factors:
\begin{enumerate}
    \item Propensity scores $\pi(x)$;
    \item Regression function variances $\muone(x)(1-\muone(x))$ and $\muz(x)(1-\muz(x))$; 
    \item Heterogeneity in the log odds ratio $\E\Bigg[\Bigg(\log\Big(\frac{\frac{\muone(X)}{1-\muone(X)}}{\frac{\muz(X)}{1-\muz(X)}} \Big) - \log(\gamma) \Bigg)^2\Bigg]$.
\end{enumerate}
The efficiency bound \emph{decreases} with the regression function variances, which is notably different from the average risk difference efficiency bound \cite{hahn1998role}.

\begin{proof}
First we will state some preliminaries that are useful in deriving the variance of the efficient influence function.

For any function $g(X)$,
\begin{align}
    \E[A(Y -\muone(X))g(X)] &=0 \label{equation: AY muone has zero expectation} \\
    \E[(1-A)(Y -\muz(X))g(X)] &=0 \label{equation: 1-A Y muzero has zero expectation}
\end{align}
This holds from iterated expectation:
\begin{align*}
    \E[A(Y -\muone(X))g(X)] &= \E[ \E[A(Y -\muone(X)) \mid X] g(X)] \\
    &= \E[\E[(Y -\muone(X)) \mid X, A= 1] \pi(X) g(X)]  = 0. 
\end{align*}

Additionally,
\begin{align}
    \E[A(Y -\muone(X))^2 \mid X] &= \E[(Y -\muone(X))^2 \mid X, A = 1]\pi(X) \label{equation: AY muone squared} \\
    \E[(1-A)(Y -\muz(X))^2 \mid X] &= \E[(Y -\muz(X))^2 \mid X, A = 0](1-\pi(X)) \label{equation: 1-A Y muzero squared}.
\end{align}

We also restate the influence function for $\log(\gamma)$:

\begin{align}\label{equation: influence function definition under random sampling}
    \varphi &= \log \Bigg( \frac{\frac{\mu_1(X)}{1-\mu_1(X)}}{\frac{\mu_0(X)}{1-\mu_0(X)}} \Bigg) +  \frac{A}{\pi(X)}  \frac{Y -\mu_1(x)}{(1-\mu_1(x))\mu_1(x)}   - \frac{1-A}{1-\pi(X)}  \frac{Y-\mu_0(x)}{(1-\mu_0(x)) \mu_0(x)} - \log(\gamma).
\end{align}

With these in hand, we have
\begin{align*}
    \var(\phi) &= \var(\varphi) = \E[\varphi^2] \\
    &= \E \Bigg[ \Bigg(\log\Big(\frac{\frac{\muone(X)}{1-\muone(X)}}{\frac{\muz(X)}{1-\muz(X)}} \Big) - \log(\gamma) \Bigg)^2  + \Bigg( \frac{A}{\pi(X)}  \frac{Y -\muone(X)}{(1-\muone(X))\muone(X)}   - \frac{1-A}{1-\pi(X)}  \frac{Y-\muz(X)}{(1-\muz(X)) \muz(X)}\Bigg)^2 \\
    &+ 2 \Bigg(\log\Big(\frac{\frac{\muone(X)}{1-\muone(X)}}{\frac{\muz(X)}{1-\muz(X)}} \Big) - \log(\gamma) \Bigg)\Bigg(\frac{A}{\pi(X)}  \frac{Y -\muone(X)}{(1-\muone(X))\muone(X)}   - \frac{1-A}{1-\pi(X)}  \frac{Y-\muz(X)}{(1-\muz(X)) \muz(x)} \Bigg)  \Bigg] \\
  &= \E \Bigg[ \Bigg(\log\Big(\frac{\frac{\muone(X)}{1-\muone(X)}}{\frac{\muz(X)}{1-\muz(X)}} \Big) - \log(\gamma) \Bigg)^2  + \Bigg( \frac{A}{\pi(X)}  \frac{Y -\muone(X)}{(1-\muone(X))\muone(X)}   - \frac{1-A}{1-\pi(X)}  \frac{Y-\muz(X)}{(1-\muz(X)) \muz(X)}\Bigg)^2  \Bigg] 
\end{align*}
where the first line uses the fact that the centered influence function has mean-zero, the second line uses our definition of $\varphi$ in Equation~\ref{equation: influence function definition under random sampling}, and the third uses the properties~\ref{equation: AY muone has zero expectation}-\ref{equation: 1-A Y muzero has zero expectation}.
We can simplify the second term:
\begin{align*}
    & \E \Bigg[ \Bigg( \frac{A}{\pi(X)}  \frac{Y -\muone(X)}{(1-\muone(X))\muone(X)}   - \frac{1-A}{1-\pi(X)}  \frac{Y -\muz(X)}{(1-\muz(X)) \muz(X)}\Bigg)^2  \Bigg] \\
    & =\E \Bigg[  \frac{A^2}{\pi(X)^2}  \frac{(Y -\muone(X))^2}{(1-\muone(X))^2\muone(X)^2}   - \frac{(1-A)^2}{(1-\pi(X))^2}  \frac{(Y -\muz(X))^2}{(1-\muz(X))^2 \muz(X)^2}  \\
    &-2 \frac{A}{\pi(X)}  \frac{Y -\muone(X)}{(1-\muone(X))\muone(X)} \frac{1-A}{1-\pi(X)}  \frac{Y -\muz(X)}{(1-\muz(X)) \muz(X)}  \Bigg]
    \\& =\E \Bigg[  \frac{A}{\pi(X)^2}  \frac{(Y -\muone(X))^2}{(1-\muone(X))^2\muone(X)^2}   - \frac{1-A}{(1-\pi(X))^2}  \frac{(Y -\muz(X))^2}{(1-\muz(X))^2 \muz(X)^2}  \Bigg]
    \\& =\E \Bigg[  \frac{1}{\pi(X)}  \frac{\E[(Y -\muone(X))^2 \mid X, A =1]}{(1-\muone(X))^2\muone(X)^2}   - \frac{1}{(1-\pi(X)} \frac{\E[(Y -\muz(X))^2\mid X, A = 0]}{(1-\muz(X))^2 \muz(X)^2}  \Bigg]
     \\& =\E \Bigg[  \frac{1}{\pi(X)(1-\muone(X))\muone(X)}   - \frac{1}{(1-\pi(X))(1-\muz(X)) \muz(X)}  \Bigg]
\end{align*}

The second equality uses the fact that $A^2 =A$ which also implies that $A(1-A)=0$. The third uses iterated expectation and properties~\ref{equation: AY muone squared}-\ref{equation: 1-A Y muzero squared}. The fourth uses the fact that $\E[(Y -\muone(X))^2 \mid X, A =1] = (1-\muone(X))\muone(X)$ and $\E[(Y -\muz(X))^2 \mid X, A =0] = (1-\muz(X))\muz(X)$.

\end{proof}

 We propose the estimator 

\begin{align*}
\hat{\gamma}_P = \exp\big(\pn(&\phi(Z; \hat{\mu}_1, \hat{\mu}_0, \hat{\pi}))\big) \quad \mathrm{where} \\
    &\phi(Z; \mu_1, \mu_0, \pi)) = \log \Bigg( \frac{\frac{\mu_1(X)}{1-\mu_1(X)}}{\frac{\mu_0(X)}{1-\mu_0(X)}} \Bigg) +  \frac{A}{\pi(X)}  \frac{Y -\mu_1(x)}{(1-\mu_1(x))\mu_1(x)}   - \frac{1-A}{1-\pi(X)}  \frac{Y-\mu_0(x)}{(1-\mu_0(x)) \mu_0(x)}.
\end{align*}

\begin{corollary}\label{corollary: asymptotic normality of estimator under random sampling}
The estimator $\hat{\gamma}_P$ is $\sqrt{n}$-consistent and asymptotically normal 
under the identification assumptions (\ref{assumption:consistency}-\ref{assumption:positivity})
and the following conditions: 
\begin{enumerate}
    \item \emph{Strong overlap:} $\p(\epsilon < \pi(X) < 1- \epsilon) = 1$  and $\p(\epsilon < \pih(X) < 1- \epsilon) = 1$ for some $\epsilon \in (0,1)$. 
    \item \emph{Outcome variance:} $\p(\mu_a(X)(1-\mu_a(X)) > \epsilon) = 1$ and $\p(\hat{\mu}_a(X)(1-\hat{\mu}_a(X)) > \epsilon) = 1$ for $a \in \{0,1\}$ and for some $\epsilon \in (0,1)$. 
    \item \emph{Convergence in probability in $L_2(\p)$ norm:} $\norm{\phi - \phih} = o_{\p}(1)$. This can be achieved by using plug-in estimates of the nuisance functions. 
    \item \emph{Sample-splitting:} $\hat{\pi}$, $\hat{\mu}_0$ and $\hat{\mu}_1$ are estimated on samples from $\hat P$.
    \item $\norm{\hat{\pi} -\pi} = O_{\p}(n^{-1/4})$.
    \item $\norm{\hat{\mu}_a -\mu_a} = o_{\p}(n^{-1/4})$ for $a \in \{0,1\}$.
\end{enumerate}

The limiting distribution is $\sqrt{n}(\hat{\gamma}_P - \gamma) \rightsquigarrow \mathcal{N}(0, \sigma^2)$ where $\sigma^2$ is given in the next theorem.
\end{corollary}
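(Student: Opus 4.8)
The plan is to first prove a central-limit statement for the logged estimator $\log(\hat\gamma_P) = \pn\big(\phi(Z;\muonehat,\muzhat,\pihat)\big)$ and then transfer it to $\hat\gamma_P$ via the delta method. Write $\hat\phi := \phi(Z;\muonehat,\muzhat,\pihat)$ and $\phi := \phi(Z;\muone,\muz,\pi)$; since the two correction terms in $\phi$ each have mean zero under $\p$, we have $\p(\phi) = \log(\gamma)$. I would then decompose
\[
\pn(\hat\phi) - \log(\gamma) = (\pn - \p)(\phi) + (\pn - \p)(\hat\phi - \phi) + \big(\p(\hat\phi) - \log(\gamma)\big)
\]
and treat the three terms separately, following the template of the proof of Theorem~\ref{Theorem: Error of generic function of regression functions}.

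The first term is handled by the ordinary CLT: $\sqrt{n}(\pn-\p)(\phi)\rightsquigarrow\mathcal N\big(0,\var(\phi)\big)$, and $\var(\phi)=\var(\varphi)$ because $\phi$ and the efficient influence function $\varphi$ differ only by the additive constant $\log(\gamma)$; this variance is exactly what is computed in Theorem~\ref{theorem: efficiency bound for gamma under random sampling}. The second (empirical-process) term is where sample splitting is used: conditionally on $\hat P$, $(\pn-\p)(\hat\phi-\phi)$ is a centered sample average, so by a conditional Chebyshev argument (Lemma 2 of \citet{kennedy2020sharp}) it is $O_\p\big(n^{-1/2}\norm{\hat\phi-\phi}\big) = o_\p(n^{-1/2})$ under the assumed $L_2(\p)$ consistency $\norm{\hat\phi-\phi}=o_\p(1)$; the strong-overlap and outcome-variance lower bounds guarantee that the denominators appearing in $\phi$ and $\hat\phi$ are bounded away from zero, so this consistency is well-defined and is inherited from consistency of $\muonehat,\muzhat,\pihat$.

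The third term is the bias term and is the heart of the argument: $\p(\hat\phi)-\log(\gamma) = \p(\hat\phi)-\p(\phi)$ equals precisely the second-order remainder $R_2(\hat P, P)$ in the von Mises expansion of Lemma~\ref{lemma: remainder term for target under random sampling}. That remainder is a sum of integrals of the products $(\pibar-\pi)(\muone-\muonebar)$, $(\pibar-\pi)(\muz-\muzbar)$, $(\muone-\muonebar)^2$, and $(\muz-\muzbar)^2$ against kernels that are bounded: $\mu^*_a(x)$ lies between $\muhat_a(x)$ and $\mu_a(x)$, both bounded away from $0$ and $1$ by the outcome-variance condition, so $f''(\mu_a^*)=(2\mu_a^*-1)/\big(\mu_a^{*2}(1-\mu_a^*)^2\big)$ stays bounded, and likewise $1/\big(\muonebar(1-\muonebar)\pibar\big)$ and its $\muz$-analogue are bounded via the overlap and outcome-variance conditions. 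Cauchy--Schwarz then gives
\[
R_2(\hat P, P) = O_\p\Big(\norm{\pihat-\pi}\norm{\muonehat-\muone}+\norm{\pihat-\pi}\norm{\muzhat-\muz}+\norm{\muonehat-\muone}^2+\norm{\muzhat-\muz}^2\Big),
\]
and under $\norm{\pihat-\pi}=O_\p(n^{-1/4})$ and $\norm{\muhat_a-\mu_a}=o_\p(n^{-1/4})$ each product is $o_\p(n^{-1/2})$, so after scaling by $\sqrt n$ the bias vanishes.

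Combining the three pieces yields $\sqrt{n}\big(\pn(\hat\phi)-\log\gamma\big)\rightsquigarrow\mathcal N\big(0,\var(\varphi)\big)$. Since $\hat\gamma_P=\exp(\pn(\hat\phi))$ and $\gamma=\exp(\log\gamma)$, and $\exp$ is differentiable with derivative $\gamma$ at $\log\gamma$, the delta method (equivalently a first-order Taylor expansion of $\exp$ together with Slutsky's theorem) gives $\sqrt{n}(\hat\gamma_P-\gamma)\rightsquigarrow\mathcal N\big(0,\gamma^2\var(\varphi)\big)=\mathcal N(0,\sigma^2)$ with $\sigma^2$ as in Theorem~\ref{theorem: efficiency bound for gamma under random sampling}; in particular this implies $\sqrt n$-consistency. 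I expect the only delicate point to be checking that the overlap and outcome-variance conditions --- imposed on both the true and the estimated nuisances --- are exactly what is needed to keep every denominator bounded away from zero, both to control $R_2$ and to ensure the $L_2(\p)$ remainder in the empirical-process term vanishes; the rest is routine bookkeeping that parallels the outcome-dependent case.
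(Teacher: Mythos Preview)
Your proposal is correct and follows essentially the same route as the paper. The paper's proof consists of a single reference to Lemma~\ref{lemma: Error of psi estimator under random sampling}, which carries out exactly your three-term decomposition (the CLT term, the empirical-process term controlled via sample splitting and Lemma~2 of \citet{kennedy2020sharp}, and the second-order remainder $R_2(\hat P,P)$ bounded by Cauchy--Schwarz under the overlap and outcome-variance conditions); you are simply more explicit than the paper about the final delta-method step from $\log\gamma$ to $\gamma$.
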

\begin{proof}
Lemma~\ref{lemma: Error of psi estimator under random sampling}.
\end{proof}


\subsubsection{Useful lemmas for the random sampling setting}

\begin{lemma}
\label{lemma: remainder term under random sampling}
We have the following von Mises expansion of $\log(\gamma)$:
\begin{align*}
\log(\gamma(P)) = \log(&\gamma(\bar P)) + \int \varphi(\bar P) d(P-\bar P) + R_2(\bar{P}, P) \\
\mathrm{for} \quad   R_2(\bar{P}, P) = &\int \frac{(\bar{\pi}(x) - \pi(x))(\mu_1(x) -\bar{\mu}_1(x))}{(\bar{\mu}_1(x))(1-\bar{\mu}_1(x))\bar{\pi}(x)} + \frac{(\pibar(x) - \pi(x))(\mu_0(x) -\bar{\mu}_0(x))}{(\bar{\mu}_0(x))(1-\bar{\mu}_0(x))(1-\pibar(x))} \\
 &+ \frac{(\mu_1^{*}(x)-1/2)(\mu_1(x) -\bar{\mu}_1(x))^2}{\mu_1^{*}(x)^2(1-\mu_1^{*}(x))^2} - \frac{(\mu_0^{*}(x)-1/2)(\mu_0(x) -\bar{\mu}_0(x))^2}{\mu_0^{*}(x)^2(1-\mu_0^{*}(x))^2}dP(x)
\end{align*}


where $\mu_a^{*}(x)$ lies between $\bar{\mu}_a(x)$ and $\mu_a(x)$ for $a \in \{0,1\}$ and 
\begin{align*}
    \varphi(z; P) = \log\Bigg(\frac{\frac{\mu_1(X)}{1-\mu_1(X)}}{\frac{\mu_0(X)}{1-\mu_0(X)}}\Bigg) - \log(\gamma) + \frac{A(Y - \mu_1(X))}{\mu_1(X)(1-\mu_1(X))\pi(X)}  -  \frac{(1-A)(Y - \mu_0(X))}{\mu_0(X)(1-\mu_0(X))(1-\pi(X))}.
\end{align*}

\end{lemma}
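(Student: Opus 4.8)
The plan is to write the target as a difference of two single-regression-function functionals and expand each separately. By Eq.~\ref{equation: identification of GOR under random sampling}, $\log(\gamma(P)) = \int\big(\logit(\mu_1(x)) - \logit(\mu_0(x))\big)\,dP(x) = \theta_1(P) - \theta_0(P)$, where $\theta_a(P) := \E_{P}[\logit(\mu_a(X))]$, $\mu_a(x) = P(Y=1\mid X=x,A=a)$, and I write $\pi_1(x) = \pi(x) = P(A=1\mid X=x)$ and $\pi_0(x) = 1-\pi(x)$. The posited influence function is then exactly $\varphi(z;P) = \varphi_1(z;P) - \varphi_0(z;P)$, where
\[
\varphi_a(z;P) := \logit(\mu_a(X)) - \theta_a(P) + \frac{\Ind{A=a}\,(Y-\mu_a(X))}{\pi_a(X)\,\mu_a(X)(1-\mu_a(X))},
\]
the additive $\log(\gamma)$ constants cancelling in the difference; this $\varphi_a$ is the random-sampling counterpart of the influence function in Lemma~\ref{lemma: second order result for functions of regression functions}, specialised to $f=\logit$ (so $f'(\mu)=1/(\mu(1-\mu))$) and with the reweighting factor $\eta_y/\omega_y$ replaced by $1$. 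It therefore suffices to obtain a von Mises expansion $\theta_a(P) = \theta_a(\bar P) + \int \varphi_a(\bar P)\,d(P-\bar P) + R_{2,a}(\bar P, P)$ for each $a$ and set $R_2 = R_{2,1} - R_{2,0}$.

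First I would note $\E_{\bar P}[\varphi_a(\bar P)] = 0$ — immediate from $\E_{\bar P}[\Ind{A=a}(Y-\bar\mu_a(X))\mid X] = \bar\pi_a(X)\cdot 0$ by iterating on $X$ — so that $\int \varphi_a(\bar P)\,d(P-\bar P) = \E_P[\varphi_a(\bar P)]$. Then I would compute the per-arm remainder $R_{2,a}(\bar P,P) := \theta_a(P) - \theta_a(\bar P) - \E_P[\varphi_a(\bar P)]$ directly: using $\E_P[\Ind{A=a}(Y-\bar\mu_a(X))\mid X] = \pi_a(X)(\mu_a(X)-\bar\mu_a(X))$ and cancelling the two $\theta_a(\bar P)$ terms, this reduces to
\[
R_{2,a}(\bar P,P) = \int\Big(\logit(\mu_a(x)) - \logit(\bar\mu_a(x)) - \frac{\pi_a(x)\,(\mu_a(x)-\bar\mu_a(x))}{\bar\pi_a(x)\,\bar\mu_a(x)(1-\bar\mu_a(x))}\Big)\,dP(x).
\]

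Next I would Taylor-expand $\logit$ about $\bar\mu_a(x)$ with mean-value remainder: since $\logit''(\mu) = (2\mu-1)/(\mu^2(1-\mu)^2)$, there is $\mu_a^{*}(x)$ between $\bar\mu_a(x)$ and $\mu_a(x)$ with $\logit(\mu_a(x)) - \logit(\bar\mu_a(x)) = (\mu_a(x)-\bar\mu_a(x))/(\bar\mu_a(x)(1-\bar\mu_a(x))) + (\mu_a^{*}(x)-1/2)(\mu_a(x)-\bar\mu_a(x))^2/(\mu_a^{*}(x)^2(1-\mu_a^{*}(x))^2)$. Substituting and collapsing the two terms linear in $\mu_a(x)-\bar\mu_a(x)$ via $1 - \pi_a(x)/\bar\pi_a(x) = (\bar\pi_a(x)-\pi_a(x))/\bar\pi_a(x)$ gives
\[
R_{2,a}(\bar P,P) = \int\Big(\frac{(\bar\pi_a(x)-\pi_a(x))(\mu_a(x)-\bar\mu_a(x))}{\bar\mu_a(x)(1-\bar\mu_a(x))\,\bar\pi_a(x)} + \frac{(\mu_a^{*}(x)-1/2)(\mu_a(x)-\bar\mu_a(x))^2}{\mu_a^{*}(x)^2(1-\mu_a^{*}(x))^2}\Big)\,dP(x).
\]
Finally, setting $R_2 = R_{2,1} - R_{2,0}$: the $a=1$ arm ($\pi_1 = \pi$) contributes the first and third terms of the claimed remainder verbatim; for the $a=0$ arm ($\pi_0 = 1-\pi$, $\bar\pi_0 = 1-\bar\pi$) one has $\bar\pi_0 - \pi_0 = \pi - \bar\pi = -(\bar\pi - \pi)$, so subtracting $R_{2,0}$ turns its propensity cross-term into $+(\pibar(x)-\pi(x))(\mu_0(x)-\bar\mu_0(x))/(\bar\mu_0(x)(1-\bar\mu_0(x))(1-\pibar(x)))$ and its quadratic term into the negative $\mu_0^{*}$ term, which is precisely the stated expression for $R_2(\bar P,P)$.

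The only genuine difficulty is bookkeeping rather than analysis: keeping straight the $a=0$ signs (the overall minus on $R_{2,0}$, interacting both with $\pi_0 = 1-\pi$ and with the direction of the Taylor remainder), and tracking which quantities are integrated against $P$ versus $\bar P$ and which regression functions carry bars. There are no empirical-process, rate, or invertibility conditions to verify, since this is a population-level identity; pathwise differentiability of $\gamma$ (needed in the surrounding results) then follows by observing that each $R_{2,a}$ is a finite sum of integrals of products of nuisance-function errors and invoking Lemma 2 of \citet{kennedy2021semiparametric}.
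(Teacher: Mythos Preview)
Your proposal is correct and follows essentially the same route as the paper: decompose $\log(\gamma)$ as $\theta_1-\theta_0$ (the paper's $\psi_1-\psi_0$), obtain a von Mises expansion for each arm, and subtract. The only difference is that the paper obtains the per-arm expansion by citing Lemma~\ref{lemma: second order result for functions of regression functions} with $f=\logit$, whereas you carry out that computation directly; your explicit derivation in fact sidesteps the mild awkwardness that Lemma~\ref{lemma: second order result for functions of regression functions} is stated for the $Y$-conditional functional with $\eta_y/\omega_y$ weights rather than the unconditional random-sampling version you need here.
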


\subsubsection*{Proof of Lemma~\ref{lemma: remainder term under random sampling}}
\begin{proof}
We can write the log geometric mean as 
\begin{align*} 
    \log(\gamma(P)) &=  \psi_1(P) -  \psi_0(P) \quad  \mathrm{where} \\
   \psi_1(P) &= \int_X  \log \Big(\frac{\mu_1(x)}{1-\mu_1(x)}  \Big) dP(x) \\
   \psi_0(P) &= \int_X  \log \Big(\frac{\mu_0(x)}{1-\mu_0(x)}  \Big) dP(x) 
\end{align*}

Then, let $f(x) = \log\Big(\frac{x}{1-x}\Big)$ and apply Lemma~\ref{lemma: second order result for functions of regression functions} for $\psi_a$ to get 
\begin{align*}
\begin{split}
\psi_a(P) &= \psi_a(\bar P) + \int \varphi_a(\bar P) d(P-\bar P) + R_2(\bar{P}, P)\\ 
 R_2(\bar{P}, P) &= \int \frac{(\bar{\pi}_a(x) - \pi_a(x))(\mu_a(x) -\bar{\mu}_a(x))}{(\bar{\mu}_a(x))(1-\bar{\mu}_a(x))\bar{\pi}_a(x)} + \frac{(\mu_a^{*}(x)-1/2)(\mu_a(x) -\bar{\mu}_a(x))^2}{\mu_a^{*}(x)^2(1-\mu_a^{*}(x))^2}dP(x).
\end{split}
\end{align*}

Then we subtract the von Mises expansion for $\psi_0(P)$ from the expansion for $\psi_1(P)$  to obtain
\begin{align*}
\begin{split}
\log(\gamma(P)) &= \log(\gamma(\bar P)) + \int \varphi_1(\bar P) - \varphi_0(\bar P) d(P-\bar P) + R_2(\bar{P}, P)\\ 
 R_2(\bar{P}, P) &= \int \frac{(\bar{\pi}(x) - \pi(x))(\mu_1(x) -\bar{\mu}_1(x))}{(\bar{\mu}_1(x))(1-\bar{\mu}_1(x))\bar{\pi}(x)} + \frac{(\pibar(x) - \pi(x))(\mu_0(x) -\bar{\mu}_0(x))}{(\bar{\mu}_0(x))(1-\bar{\mu}_0(x))(1-\pibar(x))} \\
 &+ \frac{(\mu_1^{*}(x)-1/2)(\mu_1(x) -\bar{\mu}_1(x))^2}{\mu_1^{*}(x)^2(1-\mu_1^{*}(x))^2} - \frac{(\mu_0^{*}(x)-1/2)(\mu_0(x) -\bar{\mu}_0(x))^2}{\mu_0^{*}(x)^2(1-\mu_0^{*}(x))^2}dP(x).
\end{split}
\end{align*}
\end{proof}

We propose the following to estimate $\log(\gamma)$: $\hat \psi := \pn(\phi(Z; \hat{\mu}_1, \hat{\mu}_0, \hat{\pi}))$ where
\begin{align*} 
\begin{split}
\phi(Z; \mu_1, \mu_0, \pi) = \log\Bigg(\frac{\frac{\mu_1(X)}{1-\mu_1(X)}}{\frac{\mu_0(X)}{1-\mu_0(X)}}\Bigg) + \frac{A(Y - \mu_1(X))}{\mu_1(X)(1-\mu_1(X))\pi(X)}  -  \frac{(1-A)(Y - \mu_0(X))}{\mu_0(X)(1-\mu_0(X))(1-\pi(X))}.
\end{split}
\end{align*}

\begin{lemma} \label{lemma: Error of psi estimator under random sampling}
Assume the identification assumptions (\ref{assumption:consistency}-\ref{assumption:positivity}) hold 
and additionally assume the following conditions hold: 
\begin{enumerate}
    \item \emph{Strong overlap:} $\p(\epsilon < \pi(X) < 1- \epsilon) = 1$  and $\p(\epsilon < \pih(X) < 1- \epsilon) = 1$ for some $\epsilon \in (0,1)$. 
    \item \emph{Outcome variance:} $\p(\mu_a(X)(1-\mu_a(X)) > \epsilon) = 1$ and $\p(\hat{\mu}_a(X)(1-\hat{\mu}_a(X)) > \epsilon) = 1$ for $a \in \{0,1\}$ and for some $\epsilon \in (0,1)$. 
    \item \emph{Convergence in probability in $L_2(\p)$ norm:} $\norm{\phi - \phih} = o_{\p}(1)$. This can be achieved by using plug-in estimates of the nuisance functions. 
    \item \emph{Sample-splitting:} $\hat{\pi}$, $\hat{\mu}_0$ and $\hat{\mu}_1$ are estimated on samples from $\hat P$.
\end{enumerate}
For the error of the proposed estimator, we have $ \pn(\phi(Z; \hat{\mu}_1, \hat{\mu}_0, \hat{\pi})) - \log(\gamma) = $
\begin{align*}
 (\pn - \p) (\varphi(Z; \p))  + O_{\p}\Bigg(\sum_{a=0}^{1}\Big( \norm{\hat{\pi} - \pi}\norm{\mu_a -\hat{\mu}_a} + \norm{\mu_a -\hat{\mu}_a}^2 \Big) \Bigg) + o_{\p}\Big(\frac{1}{\sqrt{n}}\Big)
\end{align*}
\end{lemma}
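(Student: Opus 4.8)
The plan is to reuse the three-term decomposition from the proof of Theorem~\ref{Theorem: Error of generic function of regression functions}, specialized to $f=\logit$ in the random-sampling model (where no outcome rate $\omega$ need be estimated and there is no $\eta_y/\omega_y$ reweighting). Writing $\hat\psi=\pn\big(\phi(Z;\hat P)\big)$ with $\hat P$ shorthand for the auxiliary-sample nuisances $(\hat\mu_1,\hat\mu_0,\hat\pi)$, I would add and subtract $\p\big(\phi(Z;\hat P)\big)$ and $(\pn-\p)\big(\phi(Z;P)\big)$ to obtain
\begin{align*}
\hat\psi-\log(\gamma)
&= \underbrace{\Big(\p\big(\phi(Z;\hat P)\big)-\log(\gamma)\Big)}_{\text{(i) bias}} \\
&\quad + \underbrace{(\pn-\p)\big(\phi(Z;\hat P)-\phi(Z;P)\big)}_{\text{(ii) empirical process}} \\
&\quad + \underbrace{(\pn-\p)\big(\phi(Z;P)\big)}_{\text{(iii) linear}}.
\end{align*}
Term (iii) is immediate: since $\phi(z;P)=\varphi(z;P)+\log(\gamma)$ and $(\pn-\p)$ annihilates constants, it equals $(\pn-\p)\big(\varphi(Z;P)\big)$, the leading term in the claimed expansion.

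For term (ii), conditioning on the auxiliary sample that produced $\hat P$ makes $\phi(\cdot;\hat P)-\phi(\cdot;P)$ a fixed function, which the strong-overlap and outcome-variance conditions render square-integrable; $\pn$ is then its average over an independent sample, so Lemma 2 of \citet{kennedy2020sharp} gives term (ii) $=O_{\p}\big(n^{-1/2}\,\norm{\phi(\cdot;\hat P)-\phi(\cdot;P)}\big)$, which is $o_{\p}(n^{-1/2})$ by the $L_2(\p)$-consistency condition $\norm{\phi-\phih}=o_{\p}(1)$.

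Term (i) is where Lemma~\ref{lemma: remainder term under random sampling} does the work. Because $\int\varphi(\hat P)\,d\hat P=0$, that lemma's von Mises expansion around $\hat P$ reads $\log(\gamma(P))=\log(\gamma(\hat P))+\int\varphi(\hat P)\,dP+R_2(\hat P,P)$, and since $\phi(\cdot;\hat P)=\varphi(\cdot;\hat P)+\log(\gamma(\hat P))$ the first two summands equal $\p\big(\phi(Z;\hat P)\big)$; hence term (i) equals exactly $-R_2(\hat P,P)$, the explicit second-order remainder. To bound it I would use that strong overlap keeps $\hat\pi(1-\hat\pi)$ away from $0$ and the outcome-variance condition keeps each $\hat\mu_a(1-\hat\mu_a)$ and $\mu_a(1-\mu_a)$ above $\epsilon$; since $t\mapsto t(1-t)$ is concave, the Taylor point $\mu_a^{*}(x)$ between $\mu_a(x)$ and $\hat\mu_a(x)$ satisfies $\mu_a^{*}(x)\big(1-\mu_a^{*}(x)\big)\ge\min\{\mu_a(x)(1-\mu_a(x)),\,\hat\mu_a(x)(1-\hat\mu_a(x))\}>\epsilon$, while $|\mu_a^{*}(x)-1/2|\le 1/2$. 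Thus every coefficient in the integrand of $R_2(\hat P,P)$ is uniformly bounded, and applying Cauchy--Schwarz term by term yields $|R_2(\hat P,P)|\lesssim\sum_{a=0}^{1}\big(\norm{\hat\pi-\pi}\,\norm{\mu_a-\hat\mu_a}+\norm{\mu_a-\hat\mu_a}^2\big)$. Summing (i), (ii), and (iii) gives the stated error bound.

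The decomposition and the $L_2(\p)$ bookkeeping are routine given the machinery already in place; the one delicate point is controlling the intermediate-value denominators $\mu_a^{*}(x)^2\big(1-\mu_a^{*}(x)\big)^2$ in $R_2$, i.e.\ ensuring the Taylor-remainder point stays in the region where $\logit''$ is bounded. The concavity argument above handles this, but it implicitly relies on $\hat\mu_a$ being thresholded into $(\epsilon,1-\epsilon)$ so that the segment between $\mu_a$ and $\hat\mu_a$ never approaches $0$ or $1$; this is exactly what the assumed outcome-variance bound on $\hat\mu_a$ provides, and I would state it explicitly when invoking it.
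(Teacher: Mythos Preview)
Your proposal is correct and follows essentially the same route as the paper: the paper's proof performs the identical three-term decomposition (written there in terms of the centered $\varphi$ rather than the uncentered $\phi$, which differ by a constant), invokes Lemma~\ref{lemma: remainder term under random sampling} to identify the bias term with the second-order remainder $R_2(\hat P,P)$, applies Lemma~2 of \citet{kennedy2020sharp} under sample splitting and the $L_2$-consistency condition for the empirical-process term, and bounds $R_2$ via Cauchy--Schwarz after using the overlap and outcome-variance conditions to control the denominators. Your concavity argument for the intermediate-value denominators $\mu_a^{*}(x)^2(1-\mu_a^{*}(x))^2$ is in fact more explicit than what the paper writes, which simply asserts the $O_P$ bound ``under the conditions of this lemma.''
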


\begin{proof}
We define the shorthand $\psi = \log(\gamma)$ and $\hat{\psi} =  \pn(\phi(Z; \hat{\mu}_1, \hat{\mu}_0, \hat{\pi}))$.
\begin{align*}
\hat{\psi} - \psi &= \pn(\hat \phi) - \psi(\p) \\
&= \pn(\hat \varphi) + \psi(\hat \p) - \psi(\p) \\
&= (\pn-\p)(\hat \varphi - \varphi)  + \p(\hat \varphi - \varphi) + \pn(\varphi) + \psi(\hat \p) -  \psi(\p) \\
&= (\pn-\p)(\hat \varphi - \varphi)  + (\pn - \p)(\varphi) +  \p(\hat \varphi) + \psi(\hat \p) - \psi(\p) \\
&= (\pn-\p)(\hat \varphi - \varphi)  + (\pn - \p)(\varphi) +  R_2(\hat \p, \p)
\end{align*}

where $R_2(\hat \p, \p)$ is given in Lemma~\ref{lemma: remainder term under random sampling}.

Under the conditions of this lemma we have 
\begin{align*}
     R_2(\hat \p, \p) = O_P\Bigg(\int &(\hat \pi (x) - \pi(x))(\hat \mu_1 (x) - \mu_1(x))dP(x) + (\hat \pi (x) - \pi(x))(\hat \mu_0 (x) - \mu_0(x)) \\&+ (\hat \mu_1 (x) - \mu_1(x))^2 + 
     (\hat \mu_0 (x) - \mu_0(x))^2dP(x) \Bigg).
\end{align*}

Applying the Cauchy-Schwarz inequality yields
\begin{align*}
    R_2(\hat \p, \p) = O_P\Bigg(&\norm{\hat \pi (x) - \pi(x)}\norm{\hat \mu_1 (x) - \mu_1(x)} + \norm{\hat \pi (x) - \pi(x)}\norm{\hat \mu_0 (x) - \mu_0(x)} \\&+ \norm{\hat \mu_1 (x) - \mu_1(x)}^2 + 
     \norm{\hat \mu_0 (x) - \mu_0(x)}^2 \Bigg) .
\end{align*}

Because $P_n$ is the empirical measure on an independent sample from $\hat \p$, we can apply Lemma 2 of \citet{kennedy2020sharp} along with Condition 3 of the Lemma statement to obtain
\begin{align*}
     (\pn-\p)(\hat \varphi - \varphi) = o_p(\frac{1}{\sqrt{n}}).
\end{align*}

Putting this all together gives
\begin{align*}
\hat \psi - \psi &=(\p- \pn) (\varphi(Z; \p)) + O_{\p}\Bigg(\sum_{a=0}^{1}\Big( \norm{\hat{\pi} - \pi}\norm{\mu_a -\hat{\mu}_a} + \norm{\mu_a -\hat{\mu}_a}^2 \Big) \Bigg) + o_{\p}\Big(\frac{1}{\sqrt{n}}\Big).
\end{align*}

\end{proof}

\clearpage
\bibliographystyle{plainnat}
\bibliography{ref}
\end{document}